\author{Yihong Zhang}
\affiliation{
  \institution{University of Washington}
 \country{USA}
}
\email{yz489@cs.washington.edu}
\author{Yisu Remy Wang}
\affiliation{
  \institution{University of Washington}
 \country{USA}
}
\email{remywang@cs.washington.edu}
\author{Max Willsey}
\affiliation{
  \institution{University of Washington}
 \country{USA}
}
\email{mwillsey@cs.washington.edu}
\author{Zachary Tatlock}
\affiliation{
  \institution{University of Washington}
  \country{USA}
}
\email{ztatlock@cs.washington.edu}
\tikzstyle{node}=[fill=white, draw=none, shape=circle]
\slshape\color{green!40!black},
\slshape\color{black!60},
\lstdefinelanguage{Rust}{
  sensitive,
  morecomment=[l]{//},
  morecomment=[s]{/*}{*/},
  moredelim=[s][{\itshape\color[rgb]{0,0,0.75}}]{\#[}{]},
  morestring=[b]{"},
  alsodigit={},
  alsoother={},
  alsoletter={!},
  otherkeywords={=>},
  morekeywords={break, continue, else, for, if, in, loop, match, return, while},
  morekeywords={as, const, let, move, mut, ref, static, unsafe},
  morekeywords={dyn, enum, fn, impl, Self, self, struct, trait, type, use, where},
  morekeywords={crate, extern, mod, pub, super},
  morekeywords={abstract, alignof, become, box, do, final, macro,
    offsetof, override, priv, proc, pure, sizeof, typeof, unsized, virtual, yield},
  morekeywords=[2]{Send},
  morekeywords=[3]{bool, char, f32, f64, i8, i16, i32, i64, isize, str, u8, u16, u32, u64, unit, usize, i128, u128},
}%
\newcommand{\GJ}{\textsf{GJ}\xspace}
\newcommand{\EM}{\textsf{EM}\xspace}
\newcommand{\Egraph}{\mbox{E-graph}\xspace}
\newcommand{\egraph}{\mbox{e-graph}\xspace}
\newcommand{\egraphs}{\mbox{e-graphs}\xspace}
\newcommand{\eclass}{\mbox{e-class}\xspace}
\newcommand{\Eclasses}{\mbox{E-classes}\xspace}
\newcommand{\eclasses}{\mbox{e-classes}\xspace}
\newcommand{\ematch}{\mbox{e-match}\xspace}
\newcommand{\ematching}{\mbox{e-matching}\xspace}
\newcommand{\Ematching}{\mbox{E-matching}\xspace}
\newcommand{\Enodes}{\mbox{E-nodes}\xspace}
\newcommand{\enodes}{\mbox{e-nodes}\xspace}
\newcommand{\enode}{\mbox{e-node}\xspace}
\newcommand{\equivid}{\ensuremath{\equiv_{\sf id}}\xspace}
\newcommand{\find}{\textsf{find}\xspace}
\newcommand{\lookup}{\textsf{lookup}\xspace}
\newcommand{\egg}{\texttt{egg}\xspace}
\newcommand{\update}[1]{#1}
\def\drawplusplus#1#2#3{\hbox to 0pt{\hbox to #1{\hfill\vrule height #3 depth
      0pt width #2\hfill\vrule height #3 depth 0pt width #2\hfill
      }}\vbox to #3{\vfill\hrule height #2 depth 0pt width
      #1 \vfill}}
\newtheorem{theorem}{Theorem}
\begin{document}

\title{Relational E-matching}

\begin{abstract}

We present a new approach to e-matching
  based on relational join;
  in particular,
  we apply recent database query execution techniques
  to guarantee worst-case optimal run time.
Compared to the conventional backtracking approach
  that always searches the e-graph ``top down'',
  our new \textit{relational e-matching} approach
  can better exploit pattern structure
  by searching the e-graph according to an optimized query plan.
We also establish the first
  data complexity result for e-matching,
  bounding run time as a function
  of the e-graph size and output size.
We prototyped and evaluated our technique in the
  state-of-the-art \egg e-graph framework.
Compared to a
  conventional baseline,
  relational e-matching is
  simpler to implement and
  orders of magnitude faster in practice.


\end{abstract}

\keywords{E-matching, Relational Join Algorithms}

\maketitle

\section{Introduction}

The congruence closure data structure,
 also known as the \egraph,
 is a central component of SMT-solvers~\cite{simplify, z3, moskal, cvc4}
 and equality saturation-based optimizers~\cite{eqsat, egg}.
An \egraph compactly represents a set of terms
 and an equivalence relation over the terms.
An important operation on \egraphs is \ematching,
 which finds the set of terms in an \egraph
 matching a given pattern.
In SMT-solvers, \ematching is used to instantiate
 quantified formulas over ground terms.
In equality saturation, \ematching is used
 to match rewrite rules on an \egraph
 to discover new equivalent programs.
The efficiency of \ematching~greatly affects the overall performance
 of the SMT-solver~\cite{z3, cvc4},
 and slow \ematching~is a major bottleneck
 in equality saturation~\cite{egg, tensat, szalinski}.
In a typical application of equality saturation,
 \ematching~is responsible for 60--90\% of the overall run time~\cite{egg}.

Several algorithms have been proposed for e-matching~\cite{efficient-ematching, moskal, simplify}.
However, due to the NP-completeness of \ematching~\cite{ematching-nph},
 most algorithms implement some form of backtracking search,
 which are inefficient in many cases.
In particular, backtracking search only exploits \textit{structural constraints},
 which are constraints about the shape of a pattern,
 but defers checking \textit{equality constraints},
 which are constraints that variables should be consistently mapped.
This leads to suboptimal run time when
 the equality constraints dominate the structural constraints.

To improve the performance of backtracking-based \ematching,
 existing systems implement various optimizations.
Some of these optimizations only deal with
 patterns of certain simple shapes and are therefore \textit{ad hoc}
 in nature~\cite{moskal}.
Others attempt to incrementalize \ematching\ upon changes to the \egraph,
 or match multiple similar patterns together
 to eliminate duplicated work~\cite{efficient-ematching}.
However, these optimizations are complex to implement
 and fail to generalize to workloads where the \egraph~changes rapidly
 or when the patterns are complex.


To tackle the inefficiency and complexity involved in \ematching,
 we propose a systematic approach to \ematching~called \textbf{relational \ematching}.
Our approach is based on the observation that
 \ematching\ is an instance of a well-studied problem
 in the databases community,
 namely answering {\em conjunctive queries}.
We therefore propose to solve \ematching\ on an \egraph\ by reducing it
 to answering conjunctive queries on a relational database.
This approach has several benefits.
First, by reducing \ematching to conjunctive queries,
 we simplify \ematching by taking advantage of
 decades of study by the databases community.
Second, the relational representation provides a unified way to
 express both the structural constraints and equality constraints in patterns,
 allowing query optimizers to leverage both kinds of constraints
 to generate asymptotically faster query plans.
Finally, by leveraging the generic join algorithm,
 a novel algorithm developed in the databases community,
 our technique achieves the first worst-case optimal bound for \ematching.

Relational \ematching is provably optimal despite the NP-hardness of \ematching.
The databases community makes a clear distinction between
 {\em query complexity}, the complexity dependent on the size of the query,
 and {\em data complexity}, the complexity dependent on the size of the database.
The NP-hardness result~\cite{ematching-nph} is stated over the size of the pattern,
 yet in practice only small patterns are matched on a large \egraph.
When we hold the size of each pattern constant,
 relational \ematching runs in time polynomial over the size of the \egraph.

Our approach is widely applicable.
For example, \textit{multi-patterns} are
 typically framed as an extension to \ematching
 that allows the user to find matches
 satisfying multiple patterns simultaneously.
Efficient support for multi-patterns requires modifying the basic
 backtracking algorithm~\cite{efficient-ematching}.
In contrast, relational \ematching inherently supports multi-patterns for free.
The relational model also opens the door
 to entirely new kinds of optimizations,
 such as persistent or incremental \egraphs.

To evaluate our approach, we implemented relational \ematching for \egg, a
 state-of-the-art implementation of \egraphs.
Relational \ematching
 is simpler, more modular, and orders of magnitude faster
 than \egg's \ematching implementation.

In summary, we make the following contributions in this paper:
\begin{itemize}
\item We propose relational \ematching, a systematic approach to \ematching that
  is simple, fast, and optimal.
\item We adapt generic join to implement relational \ematching,
  and provide the first data complexity results for \ematching.
\item We prototyped relational \ematching\footnote{
  Our implementation can be found here: \url{https://github.com/yihozhang/relational-ematching-benchmark}.
  } in \egg,
  a state-of-the-art \egraph implementation,
  and we show that relational \ematching can be orders of magnitude faster.
\end{itemize}

The rest of the paper is organized as follows: \autoref{sec:background} reviews
relevant background on the \egraph~data structure, the \ematching~problem,
conjunctive queries and join algorithms.
\autoref{sec:algorithm} presents our
 relational view of \egraphs,
 our \ematching algorithm,
 and the complexity results.
\autoref{sec:optimization} discusses
optimizations on our core algorithm and addresses various practical concerns.
\autoref{sec:evaluation} evaluates our algorithm and implementation with a set of
experiments in the context of equality saturation.
\autoref{sec:discussion} discusses how the relational model opens up
 many avenues for future work in \egraphs and \ematching,
 and \autoref{sec:conclusion} concludes.


\section{Background}\label{sec:background}

Throughout the paper we follow the notation in \autoref{fig:notation}.
We define the
 \egraph\ data structure and the \ematching\ problem,
 and review background on
 relational queries and join algorithms
 that form the foundation of our \ematching\ algorithm.

\subsection{E-Graphs and E-Matching}\label{sec:background:ematching}

\newcommand\nt[1]{\ensuremath{\langle\mathit{#1}\rangle}}
\begin{figure}
  \centering
  \begin{tabular}{llcl}
     function symbols & \nt{fun} & ::= & $f \mid g \mid \ldots$                                           \\
     variables        & \nt{var} & ::= & $x \mid y \mid z \mid \ldots \mid \alpha \mid \beta \mid \ldots$ \\
     \eclass ids      & \nt{id}  & ::= & $i \mid j \mid \ldots$                                           \\
     ground terms     & \nt{t}   & ::= & $\nt{fun} \mid \nt{fun}\left(\nt{t},\ldots,\nt{t}\right)$        \\
     patterns         & \nt{p}   & ::= & $\nt{fun} \mid \nt{fun}(\nt{p},\ldots,\nt{p}) \mid \nt{var}$     \\
     \enodes          & \nt{n}   & ::= & $\nt{fun} \mid \nt{fun}(\nt{id}, \ldots, \nt{id})$               \\
     \eclasses        & \nt{c}   & ::= & $\{ \nt{n}, \ldots, \nt{n}\}$                                    \\
  \end{tabular}
  \caption{Syntax and metavariables used in this paper.}
  \label{fig:notation}
\end{figure}

Let $\Sigma$ be a set of function symbols with associated arities. A function
symbol is called a \textit{constant} if it has zero arity. Let $V$ be a set
of variables. We define $T(\Sigma, V)$ to be the set of terms constructed using
function symbols from $\Sigma$ and variables from $V$:

\update{
\begin{definition}[Terms and patterns]
  The set of terms $T(\Sigma,V)$ over function symbols $\Sigma$ and variables $V$
 is the smallest set such that (1) all variables and constants are in
$T(\Sigma,V)$ and (2) $t_1,\dots,t_k\in T(\Sigma,V)$ implies
$f(t_1,\dots,t_k)\in T(\Sigma,V)$, where $f\in \Sigma$ has arity $k$. A
\textit{ground term} is a term in $T(\Sigma,V)$ that contains no variables.
All terms in $T(\Sigma, \emptyset)$ are ground terms.
A non-ground term is also called a \textit{pattern}. We call a term of the form
$f(t_1,\ldots,t_k)$ an $f$-application term.
\end{definition}

We define a {\em congruence relation} over the terms as an equivalence relation 
 that is {\em congruent}:

\begin{definition}[Equivalence Relation]
  An \textit{equivalence relation} $\equiv_\Sigma$ is a binary relation over
$T(\Sigma,\emptyset)$ that is reflexive, symmetric, and transitive. 
\end{definition}

\begin{definition}[Congruence Relation]
  A \textit{congruence relation} $\cong_\Sigma$ is an equivalence relation
satisfying:
\[
  \forall \text{$k$-ary function symbols $f$}.\
  \left(
    \forall i\in\{1, \ldots, k\}.\ t_i \cong t'_i
  \right)
  \implies
  f(t_1,\dots,t_k)\cong f(t'_1,\dots,t'_k)
\]  
The {\em congruence closure} of a binary relation $R$ on $T(\Sigma, \emptyset)$
 is the smallest congruence relation that contains $R$.
\end{definition}

We write $\equiv$ and $\cong$ when $\Sigma$ is clear from the context.

\newcommand{\child}{\textit{child}} \newcommand{\sym}{\textit{symbol}}

An \egraph is a data structure that represents a congruence relation.
An \egraph is built up from {\em \eclasses} and {\em \enodes}, defined as follows:
\begin{definition}[\Eclasses and \Enodes]
  An \eclass is a set of \enodes. Every \eclass is identified by one or more ids.
  An \enode is a tuple $(f, \text{args})$ where $f$ is a function symbol 
   and args is a (possibly empty) list of \eclass ids. 
Similar to terms, we call an \enode\ of the form $(f, i_1, \ldots i_k)$
 an $f$-application \enode.
We will write $f(i_1, \ldots, i_k)$ for the \enode $(f, i_1, \ldots i_k)$
\end{definition}

\begin{definition}
  A \egraph is a tuple $(C, E, I, U, M, \lookup)$ where:
  \begin{itemize}
    \item $C$ is a set of \eclasses over $E$ which is a set of \enodes;
    $I$ is a set of \eclass ids.
    \item
      A union-find~\cite{unionfind} data structure $U$
      stores an equivalence relation (denoted with $\equivid$)
      over \eclass ids.
      The union-find provides a function \find that
       \textit{canonicalizes} \eclass ids such that
       $\find(i_1) = \find(i_2) \iff i_1 \equivid i_2$.
      An \eclass id $i$ is canonical if $i = \find(i)$.
    
    \item
      The \textit{\eclass map} $M$ is a surjective function 
      that maps \eclass ids to \eclasses.
      All equivalent \eclass ids map to the same \eclass, i.e.,
      $a \equivid b$ iff $M[a]$ is the same set as $M[b]$.
      An \eclass id $a$ is said to \textit{refer to} the \eclass $M[\find(a)]$.
    
    \item
      A function \lookup that maps \enode $n$ to an id of
        \eclass that contains it: $n \in M[\lookup(n)]$.
    \end{itemize}
\end{definition}
}
 
Note that by definition, no two \enodes in the same \egraph 
 can have the same symbol and children, i.e.,
 an \enode's symbol and children together uniquely identify
 the \eclass that contains it.
This property is necessary for \lookup to be a function.
\autoref{sec:fd} explains how this property also
 translates to a {\em functional dependency}
 in the \egraph's relational representation,
 which could be leveraged to further optimize relational \ematching.

 \update{
\autoref{fig:egraph-eg} shows an example \egraph, where each dotted box
 is an \eclass, and each solid box together with argument pointers 
 make up an \enode. 
In this \egraph there is one unique id for each \eclass,
 shown in the shaded labels on the top-left corner of each \eclass. 
For example, the \eclass in the middle contains the set of \enodes 
$g(1), \ldots, g(N)$ and has id $i_g$. }

An \egraph $E$ efficiently represents sets of ground terms in a congruence
 relation.
An \egraph, \eclass, or \enode is said to \textit{represent} a term $t$ if $t$ can be
``found'' within it. 

\begin{definition}[Representation]
  Representation is defined recursively:
\begin{itemize}
\item An \egraph\ represents a term if any of its \eclasses\ does.
\item An \eclass\ $c$ represents a term if any \enode $n \in c$ does.
\item An \enode $f(j_{1}, \ldots, j_k)$ represents a term $f(t_{1}, \ldots, t_k)$
  if they have the same function symbol $f$
  and \eclass $M[j_{i}]$ represents term $t_{i}$ for $i \in \{1, \ldots, k\}$.
\end{itemize}
\end{definition}


\begin{figure}
  \centering
  \includegraphics[height=5cm]{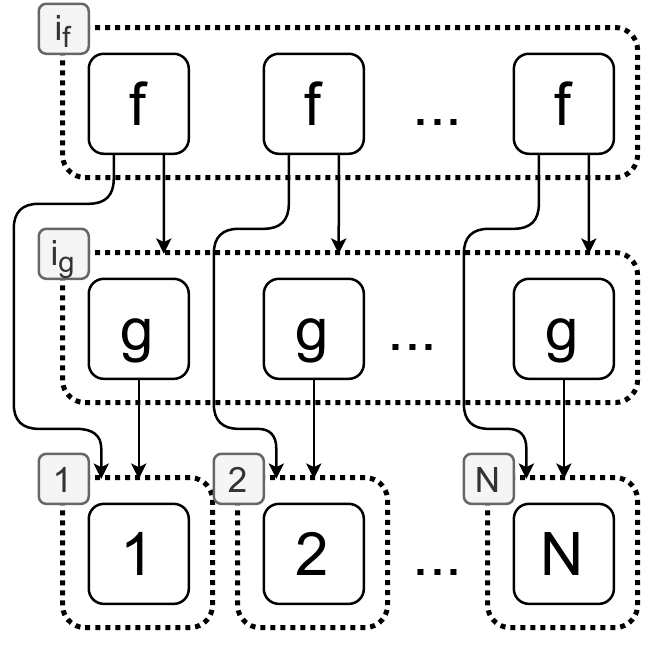}
  \caption{An example \egraph.}\label{fig:egraph-eg}
\end{figure}

The \egraph in~\autoref{fig:egraph-eg} represents the set of
terms (where $[N] = \{1, 2, \ldots, N\}$):
$$[N]\cup \{g(i) \mid i\in [N]\} \cup\{f(i, g(j)) \mid i, j \in [N]\}.$$
In addition, all $g$-terms are equivalent,
 and all $f$-terms are equivalent.
Note that the \egraph~has size $O(N)$,
 yet it represents $\Omega(N^{2})$ many terms.
In general, an \egraph is capable of representing
 exponentially many terms in polynomial space.
If the \egraph has cycles,
 it can even represent an infinite set of terms.
For example, the \egraph with a single \eclass $c = \{f(c), a\}$
 represents the infinite set of terms $\{a, f(a), f(f(a)), \ldots\}$.

\subsubsection*{\Ematching.}
\Ematching~finds the set of terms in an \egraph~matching a given pattern.
Specifically, \ematching~finds the set of \ematching~substitutions and a root
class that represents the terms.

\begin{definition}[\Ematching substitution]
  An \textit{\ematching~substitution} $\sigma$ is a function that maps
every variable in a pattern to an \eclass.  
\end{definition}
 For convenience, we use $\sigma(p)$ to
denote the set of terms obtained by replacing every occurrence of variable $v_i$
in $p$ with terms represented by $\sigma(v_i)$.

\begin{definition}[The \Ematching problem]
  Given an e-graph $E$ and a pattern $p$,
\ematching~finds the set of all possible pairs $(\sigma, r)$ such that every
term in $\sigma(p)$ is represented in the \eclass $r$. Terms in $\sigma(p)$ are
said to be matched by pattern~$p$, and $r$ is said to be the root of matched terms.  
\end{definition}
For example, matching the pattern $f(\alpha, g(\alpha))$ against the \egraph~$G$
in \autoref{fig:egraph-eg} produces the following $N$ substitutions,
 each with the same root $i_f$:
\[
  \left\{
    (\{\alpha \mapsto j\}, i_f) \mid j \in [N]
  \right\}.
\]

Existing \ematching~algorithms perform backtracking search directly on the
\egraph~\cite{efficient-ematching,simplify,egg}.
\autoref{alg:backtrack} shows
 an abstract backtracking-based
 \ematching~algorithm.
Most
\ematching~algorithms using backtracking search can be viewed as
optimizations based on this abstract algorithm. Specifically,
it will perform a top-down search following the shape of the pattern and prune
the result set of substitutions when necessary.
To match pattern $f(\alpha,
g(\alpha))$ against $G$, backtracking search visits terms in the following order
(each $\hookrightarrow$ marks a backtrack step):
\begin{align*}
  f(1, g(1)) \rightarrow & \dots \rightarrow f(1, g(N)) \\
    \hookrightarrow f(2, g(1))\rightarrow & \dots \rightarrow f(2, g(N)) \\
    \hookrightarrow f(N, g(1))\rightarrow & \dots \rightarrow f(N, g(N))
\end{align*}

For each term $f(i, g(j))$ visited, whenever $i=j$ the algorithm yields a match
$\alpha \mapsto i$. Despite there being only $N$ matches,
backtracking search runs in time $O(N^{2})$.

This inefficiency is due to the fact that na\"ive backtracking does not use the
equality constraints to prune the search space \textit{globally}. Specifically,
the above \ematching~pattern corresponds to three constraints for a potential
matching term $t$:
\begin{enumerate}
\item $t$ should have function symbol $f$.
\item $t$'s second child should have function symbol $g$.
\item $t$'s first child should be equivalent
      to the child of $t$'s second child.
\end{enumerate}

We can categorize these constraints into two kinds:
\begin{itemize}
  \item \textit{Structural constraints} are derived from the structure of the pattern.
    The structure of pattern
    $f(\alpha, g(\alpha))$ constrains the root symbol and the second symbol
    to be $f$ and $g$ respectively (i.e., constraints 1 and 2).
  \item \textit{Equality constraints}
    are implied by multiple occurrences of the same variable.
    Here, the occurrences of $\alpha$ implies that the terms at these positions
    should be equivalent with each other for all matches (i.e., constraint 3),
     which we call {\it equality constraints}.
    Following \citet{moskal}, we define patterns without equality constraints to be
    \textit{linear patterns}.
\end{itemize}

Backtracking search exploits the structural constraints first and defers
 checking the equality constraints to the end.
In our example pattern $f(\alpha, g(\alpha))$,
 backtracking search enumerates all $f(i, g(j))$,
 regardless of whether $i$ and $j$ are equivalent,
 only to discard nonequivalent matches later.
Complex query patterns may involve many variables
 that occur at several places, which will makes na\"ive backtracking search
 enumerate a very large number of candidates,
 even though the result size is small.

\def\match{{\sf match}} \def\class{{\sf class}} \def\find{{\sf find}}
\def\dom{{\sf dom}}
\begin{figure}
  \begin{align*}
    \match(x,c,S) = & \{ \sigma \cup \{ x \mapsto c\} \mid \sigma \in S, x \not \in \dom(\beta)\}\ \cup\\
                    & \{ \sigma \mid \sigma \in S, \sigma(x) = c \}\\
    \match(f(p_{1}, \dots, p_{k}), c, S) = & \bigcup_{f(c_{1},\dots,c_{k})\in c}
                                             \match(p_{k}, c_{k}, \dots, \match(p_{1}, c_{1}, S))
  \end{align*}
  \caption{A declarative backtracking-based \ematching{} algorithm (reproduced
    from~\citet{efficient-ematching}). The set of substitutions for pattern $p$
    on \egraph~$G$ with \eclasses $C$ can be obtained by computing $\bigcup_{c\in C}\match(p, c, \emptyset)$.
  }\label{alg:backtrack}
\end{figure}

\subsection{Conjunctive Queries}

Conjunctive queries are a subset of queries in relational algebra
 that use only select, project, and join operators
(as opposed to union, difference, or aggregation).
Conjunctive queries have many desirable theoretical properties
 (like computable equivalence checking),
 and they enjoy efficient execution
 thanks to decades of research from the databases community.

\subsubsection*{Relational Databases.}
A relational schema $S_D$ over domain $D$ is a set of relation symbols with
 associated arities.
A relation $R$ under a schema $S_D$ is a set of tuples;
 for each tuple $(t_1,\ldots,t_k) \in R$,
 $k$ is the arity of $R$ in $S_D$ and $t_i$ is an element in $D$.
A database instance (or simply database) $I$ of $S_D$ is a set of relations under $S_D$.

We use the notation $R(x, y).x$ to denote projection,
 i.e., $R(x, y).x = \{x \mid (x, y) \in R \}$.

\subsubsection*{Conjunctive Queries.}
A conjunctive query $Q$ over the schema $S_D$ is a formula of the form:
\[
  Q(x_1,\ldots x_k) \gets R_1(x_{1,1},\ldots,x_{1,k_1}),\ldots,
  R_n(x_{n,1},\ldots,x_{n,k_n}),
\]
where $R_1\ldots R_n$ are relation symbols in $S_D$ with arities
$k_1,\ldots k_n$ and the $x$ are variables.\footnote{
  Some definitions of conjunctive queries allow both variables and constants.
  We only allow variables without loss of generality:
  any constant $c$ can be specified with a distinguished relation $R_{c} = \{c\}$.}
We call the $Q(\ldots)$ part the \textit{head} of the query,
 the remainder is the \textit{body}.
Each $R_i(\ldots)$ is called an {\em atom}.
Variables that appear in the head are called \textit{free variables},
 and they must appear in the body.
Variables that appear in the body but not the head are called \textit{bound variables},
 since they are implicitly existentially quantified.


\subsubsection*{Semantics of Conjunctive Queries.}
Similar to \ematching,
 evaluating a conjunctive query $Q$ yields substitutions.
Specifically,
 evaluation yields substitutions
 that map free variables in $Q$ to
 elements in the domain such that
 there exists a mapping of the bound variables
 that causes every substituted atom to be present in the database.
Bound variables are {\em projected out} and not present in
 resulting the substitutions.


More formally,
 let $I$ be a database of schema $S_D$ and
 let $Q$ be a conjunctive query over the same schema
 with $k$ variables in its head.
Let the $n$ atoms in the body of
 $Q$ be $R_1, \ldots, R_n$
 where $R_j$ has arity $k_j$.
Evaluating $Q$ over $I$ yields
 a substitution
 $\sigma = \{ x_1 \mapsto t_1, \ldots, x_k \mapsto t_k \}$
 iff
 there exists a $\sigma' \supset \sigma$ mapping
 all variables in $Q$ such that:
\[
  \bigwedge_{j \in [n]}(\sigma'(x_{j,1}),\ldots,\sigma'(x_{j,k_j})) \in R_{j}
\]
%
%



In practice, conjunctive queries are often evaluated according to a \textit{query plan}
 which dictates each step of execution.
For example,
 many industrial database systems will construct tree-like query plans,
 where each node describes an operation like scanning a relation or joining two
 intermediate relations.
Industrial database systems typically construct query plans
 based on binary join algorithms such as hash joins and merge-sort join,
 which process two relations at a time.
The quality of a query plan critically
 determines the performance of evaluating a conjunctive query.

We observe that conjunctive query and \ematching
 are structurally similar: both are defined as finding
 substitutions whose instantiations are present in a database.
Therefore, it is tempting to reduce \ematching to
 a conjunctive query over the relational database,
 thereby benefiting from well-studied techniques from the databases community,
 including join algorithms and query optimization.
We achieve exactly this in \autoref{sec:algorithm}.

\subsection{Worst-Case Optimal Join Algorithms}



The run time of any algorithm for answering conjunctive queries
 is lower-bounded by the output size,
 assuming the output must be materialized.
How large can the output of a conjunctive query be on a particular database?
A na{\"i}ve bound is simply the product of the size of each relation,
 which is the size of their cartesian product.
Such a na{\"i}ve bound fails to consider the query's structure.
The AGM bound~\cite{agm}
 gives us a bound for the worst-case output size.
In fact, the AGM bound is tight;
 there always exists a database
 where the query output size is the AGM bound.

The AGM bound and worst-case optimal joins
 are recent developments in databases research.
We do not attempt to provide a comprehensive
 background on these topics here;
 familiarity with the existence of the AGM bound
 and the generic join algorithm is sufficient for this paper.

Consider $Q(x,y,z) \leftarrow R(x,y), S(y,z), T(x,z) $,
 also known as the ``triangle query'',
 since output tuples are triangles between the edge relations $R,S,T$.
We calculate a trivial bound $|Q| \leq |R| \times |S| \times |T|$.
If $|R|=|S|=|T|=N$, then $|Q| \leq N^3$.
We can derive a tighter bound from $|Q| \leq |R| \times |S| = N^2$.
That is because $Q$
contains fewer tuples than the query $ Q’(x,y,z) \leftarrow R(x,y), S(y,z) $
as $Q$ further requires $(x,z) \in T$.
The AGM bound for $Q$ is even smaller: $N^{3/2}$.
It is computed from the \emph{fractional edge
  cover} of the \emph{query hypergraph}.

\subsubsection*{Query Hypergraph.}
\label{sec:hypergraph}

\begin{figure}
  \centering \tikzfig{triangle}
  \caption{ Query hypergraph of $Q(x, y, z) \gets R(x,y), S(y,z), T(z, x)$. }\label{fig:wcoj}
\end{figure}

The hypergraph of a query is simply the hypergraph
 with
 a vertex for each variable
 and a (hyper)edge for each atom.
The edge for an atom $R(x_{i}, \ldots, x_{k})$ connects the vertices
 corresponding to the variables $x_{i}, \ldots, x_{k}$.
\autoref{fig:wcoj} illustrates $Q$'s hypergraph.

\subsubsection*{Cyclic and Acyclic Queries.}
Certain queries can be represented by a tree,
 called the \textit{join tree},
 where each node corresponds to an atom in the query.
Furthermore, for each variable $x$ the nodes
 corresponding to the atom containing $x$
 must form a connected subtree.
Queries that admit such a join tree
 are said to be \textit{acyclic};
 otherwise, the query is \textit{cyclic}.\footnote{
  A cycle in the hypergraph does not necessarily entail a cyclic query,
   since the hypergraph may still admit a join tree.
}
The triangle query is cyclic because it cannot
 be represented by a join tree.
Acyclic queries can be answered more efficiently
 than cyclic ones.

\subsubsection*{Fractional Edge Cover.}
A set of edges \emph{cover} a graph if they touch all vertices. For $Q$'s
hypergraph, any two edges form a cover.
A \emph{fractional edge cover} assigns a
 weight in the interval $[0, 1]$ to each edge
 such that,
 for each vertex $v$,
 the weights of the edges containing $v$ sum to at least 1.
Every edge cover is a fractional cover,
 where every edge is assigned a weight of 1 if it is in the cover,
 and 0 otherwise.
For $Q$'s hypergraph,
 $\{R \mapsto 1/2, S \mapsto 1/2, T \mapsto 1/2\}$
 is the fractional edge cover with lowest total weight.

\subsubsection*{The AGM Bound.}
The AGM bound~\cite{agm} for a query with body atoms
 $R_{i}(\ldots)$ for $i \in [k]$
 is defined as $\min_{w_1,\ldots,w_k} \Pi_{i \in [k]} |R_{i}|^{w_i}$,
 where $\{R_{i} \mapsto w_{i} \mid i \in [k] \}$ forms a fractional edge cover.
For example, the AGM bound for $Q$ is
 $|R|^{1/2}|S|^{1/2}|T|^{1/2} = N^{3/2}$ when $|R| = |S| = |T| = N$.
This is the upper bound of $Q$’s output size;
 i.e.\ in the worst case $Q$ outputs this many tuples.

\subsubsection*{Generic Join.}

A desirable algorithm for answering conjunctive queries
 should run in time linear to the worst case output size.
Recent developments in the databases community have led to
 such an algorithm~\cite{wcoj}, one of which is \textit{generic join}~\cite{gj}.
Generic join has one parameter:
 an ordering of the variables in the query.
Any ordering guarantees a run time linear to the worst-case output size,
 but different orderings can lead to dramatically different run time
 in practice~\cite{emptyheaded}.

\begin{algorithm}
  \LinesNumbered{} \SetAlgoLined{}
  \KwResult{$\text{GJ}(Q, \emptyset)$ computes the output of query $Q$}
  \KwIn{query $Q$, partial substitution $\sigma$}
  \tcc{$k$ indicates how many variables remain in query $Q$}
  \eIf(\tcc*[f]{there are no more variables, so $\sigma$ is complete}){$k=0$}
    {output $\sigma$}
    {
      choose a variable $x$\;
      \tcc{
        Compute $D_x$, which all possible values of $x$, by intersecting the attributes of the relations where $x$ occurs.
        Intersection must be computed in $O(\min(|R_j.x|))$ time.}
      $J = \{j \mid x \in \overline{X}_j\}$\;
      $D_x = \cap_{j \in J} R_j.x$\label{ln:gj-isect}\;
      \For()
      {$v \in D_x$}
      {
        \tcc{compute residual query by replacing variable $x$ with constant $v$}
        $Q' = Q[v/x]$\;
        GJ$(Q', \sigma \cup \{ x \mapsto v \})$
      }
    }
  \caption{
    Generic join for the general query
    $Q(x_1,\ldots,x_k) \gets R_1(\overline{X}_1),\ldots,R_n(\overline{X}_n)$
  }\label{alg:gj-general}
\end{algorithm}

\autoref{alg:gj-general} shows the generic join algorithm.
Generic join is recursive,
 proceeding in two steps.
First,
 it chooses a variable from the query and
 collects all possible values for that variable in the query.
Then,
 for each of those values,
 it builds a \textit{residual query}
 by replacing occurrences of the variable in the query with a possible
 value for that variable.
These residual queries are solved recursively,
 and when there are no more variables in the residual query,
 the algorithm yields the substitution it has accumulated so far.

Generic join
 requires two important performance
 bounds to be met in order for its own run time to meet the AGM bound.
First, the intersection on line \ref{ln:gj-isect}
 must run in $O(\min(|R_j.x|))$ time.
Second,
 the residual relations
 should be computed in constant time,
 i.e., computing from the relation $R(x, y)$
 the relation $R(v_x, y)$ for some $v_x \in R(x, y).x$
 must take constant time.
Both of these
 can be solved by using tries
 (sometimes called prefix or suffix trees)
 as an indexing data structure.
\autoref{fig:simple-trie} shows an example trie.
Tries allow fast
 intersection because
 each node is a map which
 can be intersected in time linear to the size of the smaller map.
Tries also allow constant-time access
 to residual relations according to a compatible variable ordering.

 \begin{figure}
   \begin{subfigure}[b]{0.45\linewidth}
     \centering
     \begin{tabular}{|lll|}
       \hline
       \bf x & \bf y & \bf z \\
       \hline
       1 & 2 & 4 \\
       1 & 2 & 6 \\
       1 & 3 & 7 \\
       8 & 2 & 4 \\
       \hline
     \end{tabular}
     \caption{Table for relation $R(x, y, z)$.}
   \end{subfigure}
   \hfill
   \begin{subfigure}[b]{0.5\linewidth}
     \centering
     \includegraphics[height=22mm]{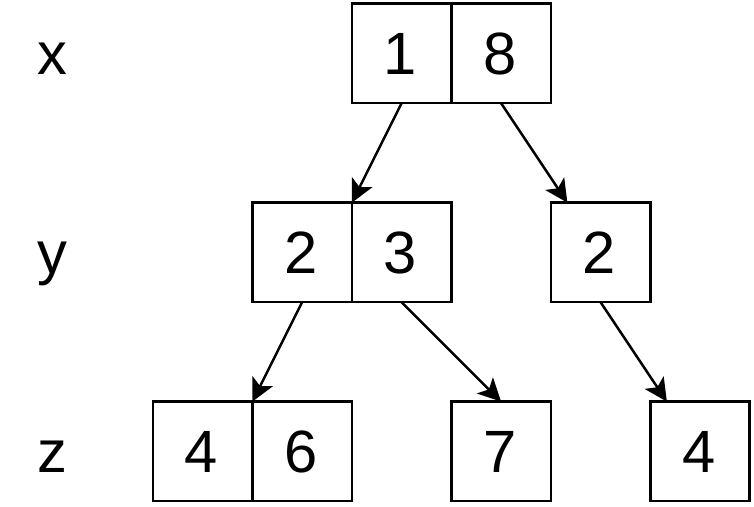}
     \caption{Trie for relation $R(x, y, z)$ using ordering $[x, y, z]$.}
   \end{subfigure}
   \caption{
     A trie is a tree where
       every node is a map (typically a hashmap or sorted map)
       from a value to a trie.
     Every path from the root of a trie to a leaf represents a tuple in the relation.
     Tries allow efficient computation of residual relations.
     For example, $R(1, y, z)$ can be computed quickly by following the
       $x=1$ edge from the root.
   }
   \label{fig:simple-trie}
 \end{figure}

A useful way to understand generic join is to observe the
 loops and intersections it performs on a specific query.
\autoref{alg:gj} shows generic join computing the triangle query.
Given a variable ordering ($[x,y,z]$ in this case),
 generic join assumes the input relations are stored in tries according to the ordering,
 so $R(x,y)$ is
 stored in a trie with $x$s on the first level, and $y$s on the second.
This makes accessing the residual relations ($R(v_x, y)$) fast since
 the replacement of variables with values
 is done according to the given variable ordering.
Note how the algorithm is essentially just nested \texttt{for} loops.
There is no explicit filtering step;
 the intersection of residual queries guarantees
 that once a complete tuple of values is selected,
 it can be immediately output without additional checking.

\begin{algorithm}
  \LinesNumbered{} \SetAlgoLined{}
  \KwResult{compute $Q(x,y,z) \gets R(x,y),S(y,z),T(z,x)$}
  $X = R(x,y).x \cap T(z,x).x$\;
  \For(\tcc*[f]{compute $Q(v_x,y,z) = R(v_x,y),S(y,z),T(z,v_x)$})
  {$v_x \in X$}
  {
    $Y = R(v_x,y).y \cap S(y,z).y$\;
    \For(\tcc*[f]{compute $Q(v_x,v_y,z) = R(v_x,v_y),S(v_y,z),T(z,v_x)$})
    {$v_y \in Y$}
    {
      $Z = S(v_y,z).z \cap T(z,v_x).z$\;
      \For(\tcc*[f]{yield join results $Q(v_x, v_y, v_z)$})
      {$v_z \in Z$}
      {output$(v_x,v_y,v_z)$ } } }
  \caption{
    Generic join for the triangle query, with ordering $[x, y, z]$.
  }\label{alg:gj}
\end{algorithm}





\section{Relational E-Matching}
\label{sec:algorithm}


\Ematching via
 backtracking search is inefficient because it handles equality constraints
 suboptimally.
In fact, backtracking search follows edges in the \egraph and only
 visits concrete terms that satisfy structural constraints.
However, equality constraints
 are checked \emph{a posteriori} only after the search visits a (partial) term.\footnote{
   Backtracking \ematching can perform the check as soon as it has traversed
   enough of the pattern to encounter a variable more than once.
}
Whenever there are many
 terms that satisfy the structural constraints but not the equality constraints,
 as is in our example pattern $f(\alpha, g(\alpha))$,
 backtracking will waste time visiting terms that do not yield a match.

By reducing \ematching to evaluating conjunctive queries,
 we can use join algorithms that take advantage of both structural
 \emph{and} equality constraints.
\autoref{fig:emvhj} conveys this intuition
 using the pattern $f(\alpha, g(\alpha))$ and the
 example \egraph and database from \autoref{fig:egraph-and-table}.
The backtracking approach considers every possible assignment
 to the variables, even those where the two occurrences of $\alpha$
 do not agree.

We can instead formulate a conjunctive query
 that is equivalent to the following pattern:
 $$Q(root, \alpha) \gets R_f(root, \alpha, x), R_g(x, \alpha).$$
Later subsections will detail how this conversion is done,
 but note how the auxiliary variable $x$ captures the structural
 constraint from the pattern.
Evaluating $Q$ with a simple hash join strategy exemplifies the
 benefits of the relational approach:
 it considers structural \emph{and} equality constraints
 (in this case by doing a hash join keyed on $(x, \alpha)$);
 indeed, the relational perspective sees
 no difference between the two kinds of constraints.

\begin{figure}
  \begin{subfigure}{0.45\textwidth}
    \centering
    \begin{tabular}{ c c c }
      $f(1, i_{g})$ & $g(1)$ & \checkmark\\
      \multicolumn{3}{c}{\vdots} \\
                    & $g(N)$ & $\times$\\
      $f(2, i_{g})$ & $g(1)$ & $\times$\\
                    & $g(2)$ & \checkmark\\
      \multicolumn{3}{c}{\vdots} \\
                    & $g(N)$ & $\times$\\
      $f(N, i_{g})$ & $g(1)$ & $\times$\\
      \multicolumn{3}{c}{\vdots} \\
                    & $g(N)$ & \checkmark\\
    \end{tabular}
    \caption{ Backtracking takes time $O(N^{2})$ }\label{tab:hj}
  \end{subfigure}
  \begin{subfigure}{0.45\textwidth}
    \centering $\rotatebox[origin=c]{90}{build hash}%
    \left\downarrow
      \begin{tabular}{ c }
        $R_g(i_g, 1)$ \\
        $R_g(i_g, 2)$ \\
        $R_g(i_g, 3)$ \\
        $\vdots$ \\
        $R_g(i_g, N)$ \\
      \end{tabular}
    \right.$ $\rotatebox[origin=c]{90}{probe}%
    \left\downarrow
      \begin{tabular}{ c c }
        $R_f(i_f, 1, i_g)$ & \checkmark \\
        $R_f(i_f, 2, i_g)$ & \checkmark\\
        $R_f(i_f, 3, i_g)$ & \checkmark\\
        $\vdots$ & \\
        $R_f(i_f, N, i_g)$ & \checkmark\\
      \end{tabular}
    \right.$
    \caption{ Hash join takes time $O(N)$. }\label{tab:hj}
  \end{subfigure}
  \caption{
    \Ematching $f(\alpha, g(\alpha))$
    with backtracking search and a simple hash join
    on the \egraph/database in \autoref{fig:egraph-and-table}.
  }\label{fig:emvhj}
\end{figure}

This observation leads us to a very simple
 algorithm for relational \ematching,
 shown in \autoref{alg:main}.
Relational \ematching~takes an e-graph $E$ and a set
 of patterns $\textit{ps}$.
It first transforms the \egraph to a relational database $I$.
Then, it reduces every pattern $p$ to a conjunctive query $q$.
Finally, it evaluates the conjunctive queries over $I$.
These intermediate steps will be detailed in the following subsections.

\begin{algorithm}
  \LinesNumbered{} \SetAlgoLined{}
  \KwIn{An \egraph~$E$ and a list of \ematching~patterns $\textit{ps}$}
  \KwOut{The result of running $\textit{ps}$ on $E$}
  $I  \gets \textsc{EGraphToDatabase}(E)$\;
  $qs \gets \{ \textsc{PatternToCQ}(p) \mid p \in ps \}$\;
  \Return $\{\textsc{EvalCQ}(q, I)\ \mid q \in qs \}$
  \caption{\textsc{RelationalEMatching}}
  \label{alg:main}
\end{algorithm}

\subsection{From the E-Graph to a Relational Database}


\begin{figure}
  \begin{subfigure}{0.45\textwidth}
    \centering \includegraphics[width=0.65\linewidth]{fgn.pdf}
    \caption{An example \egraph, reproduced from \autoref{fig:egraph-eg}.}\label{fig:egraph-eg2}
  \end{subfigure}
  \hfill
  \begin{subfigure}{0.25\textwidth}
    \centering
    \begin{tabular}{ c c c }
      \toprule
      id & $\text{arg}_{1}$ & $\text{arg}_{2}$ \\
      \midrule
      $i_{f}$ & 1 & $i_{g}$ \\
      $i_{f}$ & 2 & $i_{g}$ \\
      \vdots & \vdots & \vdots \\
      $i_{f}$ & $N$ & $i_{g}$ \\
      \bottomrule
    \end{tabular}
    \caption{Relation of $f$.}\label{tab:rel-f}
  \end{subfigure}
  \hfill
  \begin{subfigure}{0.25\textwidth}
    \centering
    \begin{tabular}{ c c }
      \toprule
      id & $\text{arg}_{1}$ \\
      \midrule
      $i_{g}$ & 1 \\
      $i_{g}$ & 2 \\
      \vdots & \vdots \\
      $i_{g}$ & $N$ \\
      \bottomrule
    \end{tabular}
    \caption{Relation of $g$.}\label{tab:rel-g}
  \end{subfigure}
  \caption{ An \egraph~and its relational representation. Each \eclass~(dotted
    box) is labeled with its id.
  }\label{fig:egraph-and-table}
\end{figure}

The first step of relational e-matching is to transform the \egraph $E$
 into a relational database $I$.
The domain of the database is \eclass ids,
 and its schema is determined
 by the function symbols in $\Sigma$.
Every \enode with symbol $f$ in the \egraph
 corresponds to a tuple in the relation $R_f$ in the database.
If $f$ has arity $k$,
 then $R_f$ will have arity $k + 1$;
 its first attribute is the \eclass id that contains
 the corresponding \enode,
 and the remaining attributes are the $k$ children of the $f$ \enode.
\autoref{fig:egraph-and-table} shows an example \egraph
 and part of its corresponding database.
In particular, only the relations
 of function symbols $f$ and $g$ are presented in this figure.
There are $N$ other relations;
 each relation $R_j$ represents a constant
 $j$ and has exactly one tuple (i.e., singleton $(j)$).

We construct the database $I$
 by simply looping over every \enode in every \eclass in $E$
 and making
 a tuple in the corresponding relation:
\[
  I =
  \left\{
    R_f \gets (\find(i), \find(j_1), \ldots, \find(j_k))
    \mid
    M[i] = f(j_1, \ldots, j_k)
  \right\}
\]
Note that the tuples in the database contain only canonical \eclass
 ids returned from the \find{} function.



Our presentation in this paper specifically targets
 \ematching use cases like equality saturation,
 where the building of the database $I$ can be amortized.
In this setting, \ematching is done in large batches,
 and expensive work like congruence closure can be
 amortized between these batches using a technique called ``rebuilding''~\cite{egg}.
The time complexity of building this database is always linear,
 which is subsumed by the time complexity
 of most non-trivial \ematching patterns.
In \autoref{sec:incremental},
 we discuss how this technique could be
 generalized to the non-amortized setting of
 frequently updated \egraph as future work.

\subsection{From Patterns to Conjunctive Queries}
\label{sec:pattern-to-query}

\def\aux{\textsc{Aux}} \def\compile{\textsc{Compile}}
\begin{figure}
  \begin{align*}
    \compile(p) &= Q(\textit{root}, v_1,\ldots,v_k) \leftarrow \textit{atoms}\\
                &\text{\quad where $v_1\dots v_k$ are variables in $p$}\\
                &\text{\quad and $\aux(p) = \textit{root} \sim \textit{atoms}$}
    \\
    \aux(f(p_1,\ldots, p_k)) &= v \sim R_f(v, v_1, \ldots, v_k), A_1, \ldots, A_k \\
                & \text{\quad { where} $v$ is fresh and }\aux(p_i) = v_i \sim A_i
    \\
    \aux(x) &= x \sim \emptyset \text{\qquad { where} $x$ is a pattern variable}\\
  \end{align*}
  \caption{Compiling a pattern to a conjunctive query.}\label{alg:compile}
\end{figure}

Once we have a database that corresponds to the \egraph,
 we must convert each pattern we wish to \ematch to a conjunctive query.
We use the algorithm in
\autoref{alg:compile} to ``unnest'' a pattern to a conjunctive query
 by connecting nested patterns with \textit{auxiliary variables}.

The \aux{}
function returns a variable and a conjunctive query atom list.
Particularly, for
non-variable pattern $f(p_1,\ldots p_k)$,
\aux{} produces a fresh variable $v$
and a concatenation of $R_f(v, v_1,\ldots, v_k)$ and atoms from $A_i$, where
$v_i\sim A_i$ is the result of calling $\aux(p_i)$.
For variable pattern $x$,
 \aux{} simply returns $x$ and an empty list.
Note that the auxiliary variables introduced by
 $\aux(f(\ldots))$ are not included in the head of the query,
 and thus are not part of the output.

Given a pattern $p$, the \compile{}
function returns a conjunctive query with body atoms from $\aux(p)$ and the head
atom consisting of the root variable and variables in $p$. The compiled
conjunctive query and the original \ematching~query are equivalent because there
is a one-to-one correspondence between the output of them.
Specifically, each
 \ematching output
 $(i_{\textit{root}}, \sigma)$
 corresponds to a query output of
 $\{ \textit{root} \mapsto i_{\textit{root}} \} \cup \sigma$.
The only difference is that
 returning the root \eclass id is a special consideration for \ematching,
 but it is just another variable in the conjunctive query.

The \compile{} function (specifically the \aux{} subroutine) relies on
 the fact that the database contains only canonical \eclass ids.
Without this fact, nested patterns would require an additional join
 on the equivalence relation $\equivid$.
But since $i \equivid j \iff i = j$
 if $i$ and $j$ are canonical \eclass ids,
 we can omit introducing the additional join,
 instead joining nested patterns directly on the auxiliary variable.

Using this algorithm, the example pattern $f(\alpha, g(\alpha))$ is compiled to
the following conjunctive query:
\begin{equation}
  \label{eqn:nonlinear-cq}
  Q(\mathit{root},\alpha)\gets
  R_f(\mathit{root},\alpha,x),R_g(x,\alpha).
\end{equation}
Compared to the original
\ematching~pattern, this flattened representation enables relational \ematching to
utilize both the structural and the equality constraints. For example, a reasonable
query plan that database optimizers will synthesize is a hash join
on both join variables (i.e., $x$ and $\alpha$), which takes
$O(N)$ time. In contrast, backtracking-based \ematching{} takes $O(N^2)$ time.

Figure~\ref{fig:emvhj} shows the traces for running a direct backtracking search on the
\egraph~and running hash join on the relational representation.
Every term enumerated by hash join will simultaneously satisfy all the constraints.
Conceptually, backtracking-based \ematching~can be seen as a hash join that only builds and
look-ups a single variable (i.e., $x$),
 and filters the outputs using the equality predicate on $\alpha$.
In other words, existing \ematching~algorithms
 will consider all $f(\alpha, g(\beta))$ terms regardless of whether $\alpha$ is congruent
 to $\beta$,
 while the generated conjunctive query gives the query optimizer the
 freedom to synthesize query plans that will consider only tuples where $\alpha\cong\beta$.

\subsection{Answering CQs with Generic Join}


Finally, we consider the problem of efficiently solving the compiled conjunctive
queries. We propose to use the generic join algorithm to solve the generated
conjunctive queries. Although traditional query plans, which are based on two
way joins such as hash joins and merge-sort joins, are extensively used in
industrial relational database engine, they may suffer on certain queries
compiled from patterns. For example, consider the pattern
$f(g(\alpha),h(\alpha))$. The compiled conjunctive query is:
\begin{equation}
  \label{eqn:cyclic-cq}
  Q(\textit{root},\alpha)\gets
  R_f(\textit{root}, x, y), R_g(x, \alpha), R_h(y, \alpha).
\end{equation}
Like the classic triangle query,
 this is a cyclic conjunctive query (\autoref{sec:hypergraph}).
We call \ematching~patterns that generate cyclic conjunctive
 queries \textit{cyclic patterns}.
For such cyclic queries,
 \citet{wcoj} show there exist databases on which {\em any}
 two-way join plan is suboptimal.
In contrast, generic join is guaranteed to run in time linear
 to the worst case output size.
Moreover, generic join can have comparable performance on acyclic
 queries with two-way join plans.
These properties make generic join our ideal
 solver for conjunctive queries generated from \ematching patterns.


Using the generic join algorithm, suppose we fix the variable ordering to be
$[\alpha, x, \textit{root}]$ on the generated conjunctive query \ref{eqn:nonlinear-cq}.
The algorithm below shows generic join instantiated on this particular CQ:

\begin{algorithm}
  \LinesNumbered{} \SetAlgoLined{}
  \KwResult{compute $Q(\mathit{root},\alpha) \gets R_f(\mathit{root},\alpha,x),R_g(x,\alpha)$}
  \tcp{compute all possible values of $\alpha$}
  $A = R_f(\mathit{root},\alpha,x).\alpha \cap R_g(x,\alpha).\alpha$\;
  \For(){$i_\alpha \in A$} {
    \tcp{compute all possible values of $x$ given $\alpha=i_\alpha$}
    $X = R_f(\mathit{root},i_\alpha,x).x \cap R_g(x,i_\alpha).x$\;
    \For(){$i_x \in X$} {
      \tcp{compute all possible values of $\mathit{root}$ given $\alpha=i_\alpha$ and $x=i_x$}
      $\mathit{Roots} = R_f(\mathit{root},i_\alpha,i_x).\mathit{root} \cap R_g(i_x,i_\alpha).\mathit{root}$\;
      \For(){$i_{\mathit{root}} \in \mathit{Roots}$}{
        output$(i_{\mathit{root}}, i_\alpha)$ } } }
  \caption{
    Relational \ematching using GJ for $f(\alpha, g(\alpha))$,
    with ordering $[\alpha, x, \mathit{root}]$.
  }\label{alg:gj-pattern}
\end{algorithm}

\subsection{Complexity of Relational \Ematching}


Generic join guarantees worst-case optimality with respect to the output size,
 and relational \ematching preserves this optimality.
In particular, we have the following theorem:
\begin{theorem}
  Relational \ematching is worst-case optimal; that is, fix a pattern $p$, let
  $M(p,E)$ be the set of substitutions yielded by \ematching on an \egraph $E$
  with $N$ \enodes, relational \ematching runs in time $O(\max_E(|M(p,E)|))$.
\end{theorem}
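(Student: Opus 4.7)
The plan is to chain together three facts established earlier in the paper: (1) compilation is output-preserving, (2) generic join is worst-case optimal in the AGM sense, and (3) the AGM bound on e-graph-derived databases is asymptotically realized by some e-graph.

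First, I would observe that the \compile\ reduction from \autoref{sec:pattern-to-query} sets up a bijection between e-matching outputs and CQ answers, so $|M(p,E)| = |Q(I_E)|$ for every e-graph $E$, where $Q = \compile(p)$ and $I_E$ is the derived database. Because each e-node contributes exactly one tuple to $I_E$, the total database size satisfies $|I_E| = N$. Since $p$ is fixed, $Q$ has a fixed hypergraph, and in particular a fixed fractional edge cover number $\rho^*(Q)$.

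Second, I would invoke worst-case optimality of generic join (\autoref{alg:gj-general}): evaluating $Q$ on $I_E$ runs in time $O(\mathrm{AGM}(Q, I_E))$, provided the trie indices supporting constant-time residual computation and linear-time intersection are in place (this is a standard preprocessing step that is $O(N)$ and subsumed by the bound). By the definition of the AGM bound and $|I_E|=N$, we get $\mathrm{AGM}(Q,I_E) \le N^{\rho^*(Q)}$, so relational \ematching\ runs in time $O(N^{\rho^*(Q)})$.

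The main step is to close the gap by showing $\max_E |M(p,E)| = \Omega(N^{\rho^*(Q)})$, which by the bijection is equivalent to exhibiting an e-graph $E^\star$ with $\Theta(N)$ e-nodes such that $|Q(I_{E^\star})| = \Omega(N^{\rho^*(Q)})$. I would adapt the classical AGM tightness construction: take an optimal fractional cover $w$ with $\sum_i w_i = \rho^*(Q)$, partition a pool of e-class ids into coordinate buckets of appropriate sizes, and populate each relation $R_f$ (one per atom of $Q$) with a product construction whose size is $N^{w_f}$ tuples. Summing over the constantly many atoms keeps the total e-node count $O(N)$. The output then contains $\Omega(N^{\rho^*(Q)})$ substitutions. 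Combining: relational \ematching's runtime is $O(N^{\rho^*(Q)}) = O(\max_E |M(p,E)|)$.

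The hard part is verifying that this database-level worst-case instance can actually be realized as the relational encoding of an e-graph. The binding constraint is the functional dependency imposed by \lookup: within each relation $R_f$, the argument columns uniquely determine the id column. I would handle this by choosing the bucket encoding so that every tuple in each $R_f$ has a distinct argument tuple (e.g., by using fresh e-class ids in the bucket coordinates of different relations), which makes the FD trivially satisfiable. I also need to check that the \find-canonicalization performed when building $I_E$ does not shrink any relation asymptotically; since the construction uses distinct canonical ids by design, this is immediate. With these checks, the construction is a legitimate e-graph and the theorem follows.
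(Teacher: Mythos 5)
Your decomposition is sound up to its last step, and you have correctly isolated the point that the paper's own two-line proof glosses over: generic join's worst-case optimality is relative to the maximum output over \emph{all} databases with the given relation sizes (the AGM bound), whereas the theorem's right-hand side maximizes only over databases of the form $I_E$ for an actual \egraph{} $E$. Closing that gap does require exhibiting a worst-case \emph{\egraph}, exactly as you say. The problem is that your realization step fails. The classical AGM tightness construction makes each relation a product of buckets, $R_f = B_{v_0}\times B_{v_1}\times\cdots\times B_{v_k}$, and whenever the optimal dual solution assigns a nonzero exponent to the id variable $v_0$ of some atom, the same argument tuple occurs with $|B_{v_0}|>1$ distinct ids, violating the functional dependency $\mathit{arg}_1,\ldots,\mathit{arg}_k\rightarrow \textit{id}$ that every \egraph-derived relation must satisfy (\autoref{sec:fd}). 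You cannot dodge this by ``using fresh \eclass{} ids in the bucket coordinates of different relations,'' because the id column of one atom is a join variable shared with other atoms (that is the whole point of the compilation), so its values are pinned by the join structure rather than freely renamable.

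Concretely, take the paper's own cyclic example $f(g(\alpha),h(\alpha))$, which compiles to $Q(\textit{root},\alpha)\gets R_f(\textit{root},x,y),R_g(x,\alpha),R_h(y,\alpha)$. The AGM exponent here is strictly greater than $1$ (the paper quotes $N^{3/2}$; counting the obligation to cover $\textit{root}$ it is in fact $N^{2}$), yet the functional dependencies $\alpha\to x$, $\alpha\to y$, and $(x,y)\to\textit{root}$ force $|M(p,E)|$ to be at most the number of \eclasses, hence at most $N$, for \emph{every} \egraph{} $E$. So $\max_E|M(p,E)|=\Theta(N)$ while $N^{\rho^*(Q)}=\omega(N)$, your claimed lower bound $\max_E|M(p,E)|=\Omega(N^{\rho^*(Q)})$ is false, and the final link $O(N^{\rho^*(Q)})=O(\max_E|M(p,E)|)$ breaks. (This is precisely the phenomenon \autoref{sec:fd} exploits as an optimization.) The paper's proof takes a different and much shorter route---a one-to-one correspondence between outputs plus an appeal to generic join's worst-case optimality---which honestly establishes only a run time of $O(\mathrm{AGM}(Q))$ and, read literally against the theorem statement, leaves the same realizability question open; it simply never attempts the lower-bound construction that your version makes explicit and that fails. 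To repair your argument you would need either to restrict attention to patterns whose optimal dual solution puts weight $0$ on every id variable (so the product instance respects the functional dependencies), or to weaken the theorem so that ``worst case'' ranges over arbitrary databases rather than over \egraphs.
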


\begin{proof}
  Notice that there is an one-to-one correspondence between output tuples of
  the generated conjunctive query and the \ematching pattern. Therefore, the
  worst-case bound is the same across an \ematching pattern and the conjunctive
  query it generated. Because generic join is worst-case optimal, relational
  \ematching also runs in worst-case optimal time with respect to the output
  size.
\end{proof}

The structure of \ematching patterns allows us to derive an additional bound
 dependent on the {\em actual} output size rather than the worst-case output size.
\begin{theorem}
  Fix an \egraph $E$ with $N$ \enodes that compiles to a database $I$,
   and a fix pattern $p$ that compiles to conjunctive query
  $Q(\overline{X}) \gets R_1(\overline{X_1}),\ldots,R_m(\overline{X_m})$.
  Relational \ematching $p$ on $E$
   runs in time
   $O\left(\sqrt{|Q(I)| \times \Pi_{i}|R_i|}\right)
   \leq
    O\left(\sqrt{|Q(I)| \times N^m}\right)
   $.
\end{theorem}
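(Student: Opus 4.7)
The approach is to invoke the worst-case optimality of generic join established in the preceding Theorem, then sharpen it with an output-sensitive argument tailored to pattern queries. By that theorem, relational \ematching runs in time $O(\mathrm{AGM}(Q, I))$, where $\mathrm{AGM}(Q, I) = \min_w \prod_i |R_i|^{w_i}$ is minimized over fractional edge covers $w$ of the hypergraph of $Q$. The goal is therefore to show $\mathrm{AGM}(Q, I) \le O\bigl(\sqrt{|Q(I)| \cdot \prod_i |R_i|}\bigr)$.

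My plan is to construct a suitable cover by convex combination. Let $w^*$ achieve the minimum in $\mathrm{AGM}(Q, I)$, and let $\vec{1}$ denote the trivial all-ones cover. The midpoint $w' = (w^* + \vec{1})/2$ is itself a valid fractional edge cover since the set of covers is convex, and one computes $\prod_i |R_i|^{w'_i} = \sqrt{\mathrm{AGM}(Q, I) \cdot \prod_i |R_i|}$. Applying the AGM inequality with this cover yields $|Q(I)| \le \sqrt{\mathrm{AGM}(Q,I) \cdot \prod_i |R_i|}$, which after squaring becomes $|Q(I)|^2 \le \mathrm{AGM}(Q,I) \cdot \prod_i |R_i|$.

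The main obstacle is that this inequality lower-bounds $\mathrm{AGM}(Q,I)$ in terms of $|Q(I)|$, whereas we need an upper bound on the runtime in terms of $|Q(I)|$. To close this gap, I would refine generic join's per-variable analysis from Algorithm 2: at each level of variable elimination, the work for computing $D_x$ is bounded both by the AGM of the residual subquery (yielding factors from $\prod_i |R_i|$) and by the number of partial substitutions that extend to at least one complete match (yielding a factor of $|Q(I)|$). A Cauchy--Schwarz-style combination of these two per-level bounds, summed across levels of the variable ordering, should produce the geometric-mean form $\sqrt{|Q(I)| \cdot \prod_i |R_i|}$. The tree-like structure introduced by the auxiliary variables in \compile{} is what makes this charging scheme work: each partial match along an auxiliary spine can be amortized against a full match rather than being orphaned.

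Finally, the second inequality $\sqrt{|Q(I)| \cdot \prod_i |R_i|} \le \sqrt{|Q(I)| \cdot N^m}$ is immediate from the construction of $I$: since each of the $N$ \enodes~contributes exactly one tuple to exactly one relation, $|R_i| \le N$ for every $i$, hence $\prod_i |R_i| \le N^m$.
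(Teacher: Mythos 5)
There is a genuine gap here, and you partly diagnose it yourself. Your midpoint-cover construction is sound as algebra, but as you note it only yields $|Q(I)|^2 \le \mathrm{AGM}(Q,I)\cdot\prod_i|R_i|$, a \emph{lower} bound on the AGM bound in terms of the output; it cannot be reversed, because on a particular instance $|Q(I)|$ may be far smaller than $\mathrm{AGM}(Q,I)$. The fallback you sketch --- a per-level analysis in which the work at each variable is charged to partial substitutions that extend to a complete match --- rests on a premise that is false in general: generic join can and does enumerate partial substitutions that extend to no output tuple (this is exactly why its runtime is bounded by the AGM bound rather than by $|Q(I)|$), so the claimed $|Q(I)|$ factor per level does not materialize from the algorithm alone. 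A Cauchy--Schwarz combination of two bounds, one of which you have not established, is not a proof.

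The missing idea is a specific structural property of queries compiled from patterns, which you gesture at (``the tree-like structure introduced by the auxiliary variables'') but do not pin down or use. The property is: every variable that occurs in only \emph{one} atom of the body is a free variable, i.e., appears in the head $\overline{X}$ (auxiliary variables link a parent atom to a child atom and hence occur at least twice). The paper exploits this by letting $\overline{X^\circ}$ be the isolated variables, defining $C(\overline{X^\circ})$ with the same body as $Q$ (so $C(I)$ is a projection of $Q(I)$ and $|C(I)|\le|Q(I)|$), and forming the augmented query $Q'$ with the extra atom $C(\overline{X^\circ})$. In $Q'$ every variable lies in at least two atoms, so the uniform weight $1/2$ is a fractional edge cover, giving $\mathrm{AGM}(Q')=\sqrt{|C(I)|\cdot\prod_i|R_i|}\le\sqrt{|Q(I)|\cdot\prod_i|R_i|}$. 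The final step shows $\GJ(Q,I)\le\GJ(Q',I)$ by choosing a variable ordering that places $\overline{X^\circ}$ last, where the extra intersections with $C$ are no-ops. Your argument contains neither the augmented query nor the observation that isolated variables are free, and without them the output-sensitive factor $\sqrt{|Q(I)|}$ has no source. Your final inequality $\prod_i|R_i|\le N^m$ is correct.
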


\begin{proof}
  Let $\overline{X^\circ}$ be the set of isolated variables,
   those that occur in only one atom.
  Note that $\overline{X^\circ} \subseteq \overline{X}$,
   since $\overline{X}$ is precisely the pattern variables and the root,
   and auxiliary variables must occur in at least two atoms.
  Using these,
   define two new queries:
  \begin{align*}
  C(\overline{X^\circ}) & \gets R_{1}(\overline{X_{1}}), \ldots, R_m(\overline{X_m}) \\
  Q'(\overline{X}) & \gets R_{1}(\overline{X_{1}}), \ldots, R_m(\overline{X_m}), C(\overline{X^\circ})
  \end{align*}
 Since $\overline{X^\circ} \subseteq \overline{X}$,
  $C$ is the same query as $Q$ but with zero or more variables projected out.
 Therefore,
  every tuple in $C(I)$ corresponds to one in the output $Q(I)$,
  so $C(I) \subseteq Q(I)$ and $|C(I)| \leq |Q(I)|$.

 Now we can compute the AGM bound for $Q'$.
 Our new atom $C(\overline{X^\circ})$
  includes all those variables that only appear in one atom of $Q$.
 Therefore, every variable in $Q'$ occurs in at least two atoms,
  so assigning $1/2$ to each edge is a fractional edge cover.
 Thus:
 \begin{align*}
   \textsf{AGM}(Q') &= \sqrt{|C(I)| \times \Pi_i |R_i|} \\
                    &\leq \sqrt{|Q(I)| \times \Pi_i |R_i|} &\text{\quad since $|C(I)| \leq |Q(I)|$} \\
                    &\leq \sqrt{|Q(I)| \times N^m} &\text{\quad since $|R_i| < N$}
 \end{align*}

 Let $\GJ(Q', I)$ denote the running time
  of generic join with query $Q'$ on database $I$.
 We know that $\GJ(Q', I) \leq \textsf{AGM}(Q')$.
 Because $C(I) \subseteq Q(I)$, we also know $Q'(I) = Q(I)$,
  and we can use $GJ(Q', I)$ to bound $GJ(Q, I)$.
 Now we show that $\GJ(Q, I) \leq \GJ(Q', I)$.

 The query $Q'$ is just $Q$ with an additional
  atom $C(\overline{X^\circ})$ that
  covers the variables that only appeared in one atom from $Q$.
 Fix a variable ordering for generic join that puts those variables
  in $\overline{X^\circ}$ at the end.
 So loops of both \GJ instantiations are the same,
  except that, in $Q'$,
  each loop corresponding to a variable in $\overline{X^\circ}$ performs an intersection with $C$,
  but not in $Q$.
 But these intersections are in the innermost loops,
  at which point all intersections with atoms from $Q$ have already been done.
 So the intersections with $C$ do nothing,
  since $C$ is precisely $Q$ projected down to the variables in $\overline{X^\circ}$!
 Since those intersections are not helpful and $Q$ simply does not do them,
  $\GJ(Q, I) \leq \GJ(Q', I)$.

 Putting the inequalities together,
  we get:
 \[
   \GJ(Q, I)
   \leq
   \GJ(Q', I)
   \leq
   AGM(Q')
   \leq
   O\left(\sqrt{|Q(I)| \times \Pi_{i}|R_i|}\right)
   \leq
    O\left(\sqrt{|Q(I)| \times N^m}\right)
 \].
\end{proof}

\begin{example}[Complexity of relational \ematching]
  Consider the pattern $f(g(\alpha))$,
   which compiles to the query $Q(r, \alpha) \gets R_{f}(r, x), R_{g}(x, \alpha)$.
  Following the proof we define $C(r,\alpha) \gets R_{f}(r,x), R_{g}(x,\alpha)$
   and
   $Q'(r, \alpha) \gets R_{f}(r, x), R_{g}(x, \alpha), C(r, \alpha)$.
  The AGM bound for $Q'$ is $N^{1/2}N^{1/2}|C|^{1/2} = N\sqrt{|C|} = N\sqrt{|Q|}$.
  This also bounds the run time of generic join on $Q$.
\end{example}

The above bound is tight for linear patterns,
 in which case each variable occurs exactly twice in $Q'$.
In the case of nonlinear patterns,
 we may find tighter covers than assigning $1/2$ to each atom,
 thereby improving the bound.

\subsection{Supporting Multi-patterns}

Multi-patterns are an extension to \ematching used in both
 SMT solvers~\cite{efficient-ematching}
 and program optimizations~\cite{tensat}.
A multi-pattern is a list of patterns of the form $(p_1, \ldots, p_k)$
 that are to be simultaneously matched (i.e., the instantiation of each contained pattern
should use the same substitution $\sigma$).
For example, \ematching the multi-pattern
 $(f(\alpha, \beta), f(\alpha, \gamma))$ searches for pairs of two $f$-applications
 whose first arguments are equivalent.
Efficient support for multi-patterns
 on top of backtracking search requires complicated additions
 to state-of-the-art \ematching algorithms~\cite{efficient-ematching}.
Relational e-matching supports multi-patterns
 ``for free'':
 a multi-pattern is compiled to a single
 conjunctive query just like a single pattern.
For example, the conjunctive query for the
 multi-pattern $(f(\alpha, \beta), f(\alpha, \gamma))$ is
\begin{equation}
  Q(\textit{root}_1,\textit{root}_2,\alpha,\beta, \gamma)\gets
  R_f(\textit{root}_1,\alpha,\beta), R_g(\textit{root}_2, \alpha, \gamma).
\end{equation}
This is one example that shows the wide applicability of the relational
 model adopted in relational \ematching.


\section{Optimizations}
\label{sec:optimization}

Our implementation of relational \ematching
 using generic join is simple (under 500 lines),
 but that does not preclude having several optimizations
 important for practical performance.

\subsection{Degenerate Patterns}
\label{sec:degenerate}

Not all patterns correspond to conjunctive queries that involve relational
joins.
Non-nested patterns (whether linear or non-linear) will produce
 relational queries without any joins:
\begin{align*}
  f(\alpha, \beta) &\leftrightarrow R_f(\mathit{root}, \alpha, \beta) \\
  f(\alpha, \alpha) &\leftrightarrow R_f(\mathit{root}, \alpha, \alpha)
\end{align*}
The corresponding query plan is simply a
 scan of a relation with possible filtering.
For these queries,
 generic join (or any other join plan) offers no benefit,
 and building the indices for generic join incurs unnecessary overhead.
A relational \ematching implementation (or any other kind)
 should have a ``fast path''
 for these relatively common types of queries
 that simply scans the \egraph/database for matching \enodes/tuples.
For this reason,
 we exclude these kinds of patterns from our evaluation in \autoref{sec:eval}.

\subsection{Variable Ordering}

Different variable orderings can dramatically affect performance
 for generic join~\cite{eval-wcoj, emptyheaded},
 so choosing a variable ordering is important.
Compared to join plans for binary joins,
 query plans for generic join is much less studied.
In relational \ematching\ we choose a variable ordering
 using two simple heuristics:
First, we prioritize variables that occur in many relations,
 because the intersected set of many relations is likely to be smaller.
Second, we prioritize variables that occur in small relations,
 because intersections involving a small relations are also likely to be smaller.
Performing smaller intersections first can prune down the search space early.

Using these two heuristics,
 the optimizer is able to find more efficient query plans
 than the top-down search of backtracking-based \ematching.
This is even true for linear patterns,
 where our relational \ematching\ has no more information than
 \ematching, but it does have more flexibility.
Consider the linear pattern $f(g(h(\alpha)))$ compiled to
the query $R_f(\mathit{root}, x), R_g(x, y), R_h(y, \alpha)$.
When there are very few $h$-application \enodes\ in the \egraph,
 $R_h$ will be small.
The variable ordering $[y, x, \mathit{root}, \alpha]$
 takes advantage of this by
 intersecting $R_g.y \cap R_h.y$ first,
 resulting in an intersection no larger than $R_h$.
This ``bottom-up'' traversal is not possible in
 conventional \ematching.


\subsection{Functional Dependencies}
\label{sec:fd}

Functional dependencies describe the dependencies between columns. For example,
a functional dependency on relation $R(y, x_1, x_2, x_3)$ of the form
$x_1,x_2,x_3\rightarrow y$ indicates that
for each tuple of $R$,  the values of $x_1$, $ x_2$, and $x_3$ uniquely
determines the value $y$. Functional dependencies are ubiquitous in relational
\ematching. In fact, every transformed schema of the form $R_f(\textit{e-class},
\textit{arg}_1,\ldots, \textit{arg}_k)$ has a functional dependency from
$\textit{arg}_1,\ldots,\textit{arg}_k$ to $\textit{e-class}$.
When the variable graph formed by functional dependencies is
 acyclic~\footnote{Cyclicity of functional dependencies is unrelated to cyclicity of the query.},
 we can speed up generic join by ordering the variables to follow the
 topological order of the functional dependency graph.
Every conjunctive query compiled from an \ematching\ pattern
 has acyclic functional dependencies,
 because each dependency goes from the \enode's children to the \enode's parent \eclass.
Relational \ematching~can therefore always choose a variable
 ordering that is consistent with the functional dependency.
Our implementation tries to respect functional dependencies,
 but prioritizes larger intersections more.

As an example, consider conjunctive query~\ref{eqn:cyclic-cq} again. It is
synthesized from pattern $f(g(\alpha), h(\alpha))$ and, assuming each relation
has size $N$, an AGM bound of $O(N^{3/2})$. Suppose however that we pick the
variable ordering $\pi$ to be $[\alpha, x, y, \textit{root}]$. For every
possible value of $\alpha$ chosen, there will be at most one possible value for
$x,y,$ and $\textit{root}$ by functional dependency, which can be immediately
determined. This reduces the run time from $O(N^{3/2})$ to $O(N)$.

\subsection{Batching}
Generic join always processes one variable at a time,
 even if multiple consecutive variables are from the same atom.
We find this strategy to be inefficient in practice,
 as it results in deeper recursion that does little useful work.

Consider the query $Q(x, y, z, w) \gets R(x, y, \alpha), S(\alpha, z, w)$.
The right variable ordering places $\alpha$ at the front,
 since it is the only intersection.
We observe that variables that only appear
 in one atom can be ``batched''
 with others that only appear in the same atom.
Batched variables are treated as a single variable
 in the trie and intersections.
So instead of variable ordering $[\alpha, x, y, z, w]$,
 we can use $[\alpha, (x, y), (z, w)]$.
This lowers the recursion depth of generic join (from 5 to 3)
 and improves data locality by reducing pointer-chasing.



\section{Evaluation}
\label{sec:eval}
\label{sec:evaluation}

To empirically evaluate relational \ematching,
 we implemented it inside the
 \egg equality saturation toolkit~\cite{egg}.
Our implementation consists
 about 80 lines of Rust inside \egg itself to convert
 patterns into conjunctive queries,
 paired with a
 a separate, e-graph-agnostic
 Rust library to implement generic join in fewer than $500$ lines.

\egg's existing
 \ematching infrastructure is also about $500$ lines of Rust,
 and it is interconnected to various other parts of \egg.
Qualitatively,
 we claim that the relational approach is simpler to implement,
 especially since the CQ solver is completely modular.
 We could plug in a different generic join implementation~\footnote{
   There is no reusable generic join implementation at the time of writing.},
 or even a more conventional binary join implementation.

In this section, we refer to \egg's existing \ematching implementation as ``\EM''
 and our relation approach as ``\GJ.''

\subsection{Benchmarking Setup}

We use \egg's two largest benchmark suites
 as the basis for our two benchmark suites.
The \texttt{math} suite implements a
 simple computer algebra system,
 including limited support for symbolic differentiation and integration.
The \texttt{lambda} suite implements a
 partial evaluator for the lambda calculus.
Each \egg benchmark suite provides a set of rewrite rules,
 each with a left and right side pattern,
 and a set of starting terms.

To construct the \egraphs used in our benchmarks
 we ran equality saturation
 on a set of terms selected from \egg's test suite,
 stopping before the \egraph reached
 1e5, 1e6, 2e6, and 3e6 \enodes.
The result is four increasingly large \egraphs
 for each benchmark suite
 filled with terms that are
 generated by the suite's rewrite rules.
For each benchmark suite
 and each of the four \egraph sizes,
 we then ran \ematching on the \egraph
 using both \EM and \GJ.
We ran each approach 10 times and
 took the minimum run time.

For our \GJ approach,
 we ran each trial twice.
The first time builds the index tries
 necessary for generic join just-in-time,
 and the run time includes that.
On the second trial,
 \GJ uses the pre-built index tries from the first run,
 so the time to build them is excluded.
In both \autoref{fig:speedup} and \autoref{tab:eval},
 orange bars/rows show the first runs (including indexing),
 and blue bars/rows show the second runs (excluding indexing).

All benchmarks are single-threaded,
 and they were executed on a
 4.6GHz CPU with 32GB of memory.

\begin{figure}
  \centering
  \includegraphics[width=1.0\linewidth]{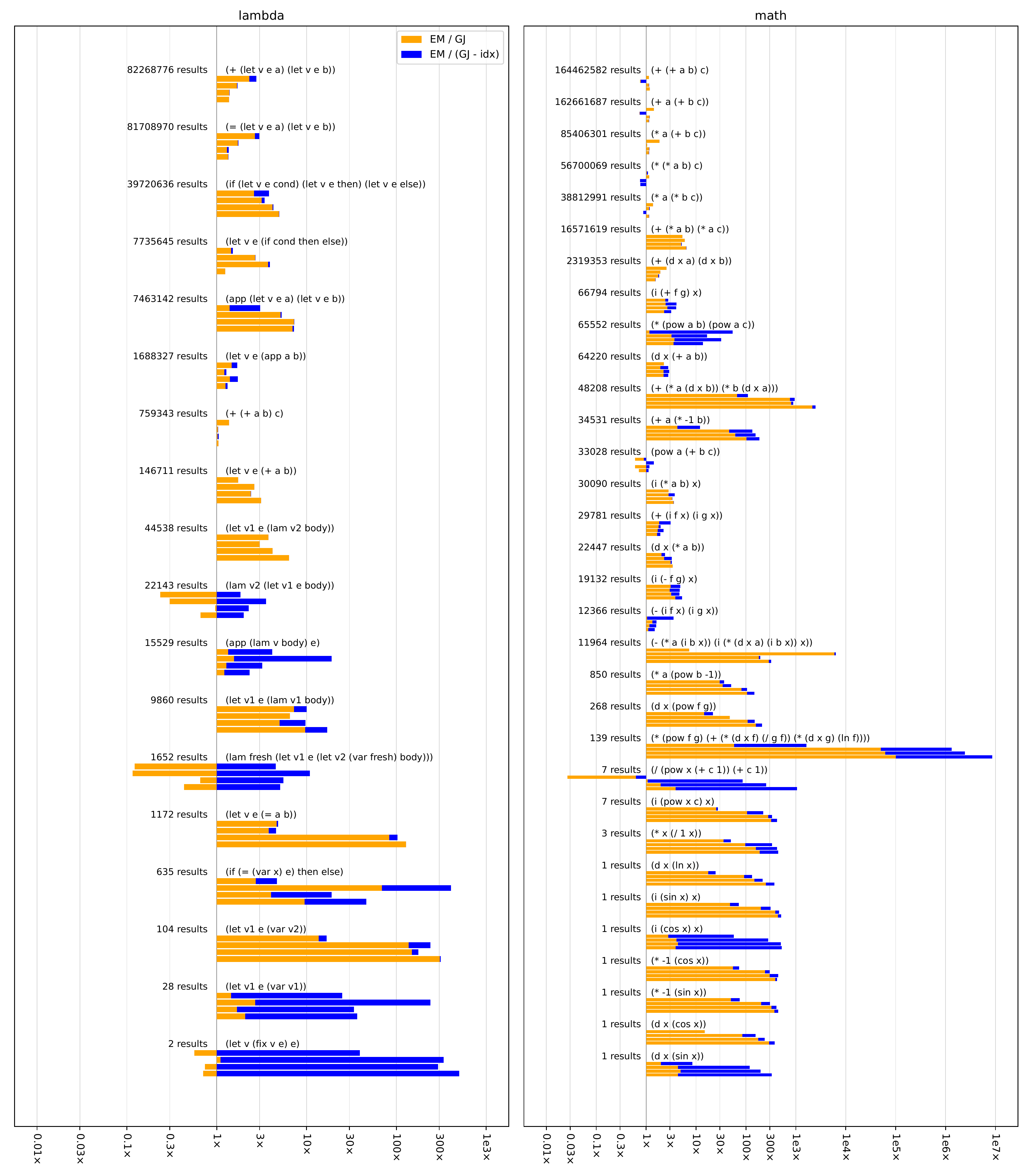}
  \caption{
    Relational e-matching can be up to 6 orders of magnitude faster
     than traditional e-matching on complex patterns.
    Speedup tends to be greater when the output size is smaller.
    Bars to the right of the ``$1\times$'' line
     indicate that relational \ematching is faster.
    The plots show two benchmark suites,
     \texttt{lambda} and \texttt{math},
     taken from the \egg test suite.
    Each group of bars
     shows the benchmarking results of e-matching a single
     pattern on 4 increasingly large e-graphs (top to bottom),
     comparing
     \egg's built-in e-matching (\EM)
     with our relational e-matching approach using generic join (\GJ).
    The orange bar shows the multiplicative speedup of our approach:
     $\EM / \GJ$.
    The blue bar shows the same, but \textit{excluding}
     the time spent building the indices needed for generic join:
     $\EM / (\GJ - \mathsf{idx})$.
    The text above each group of bars shows
     the pattern itself
     and
     the number of substitutions found on the largest \egraph;
     the patterns are sorted by this quantity.
  }\label{fig:speedup}
\end{figure}

\begin{table}
  \centering
  \caption{
    Summary statistics across patterns for each benchmark suite.
    The ``Idx'' column shows whether the time to build indices in \GJ
     is included (+) or not (--);
     the row color corresponds to the colors in \autoref{fig:speedup}.
    The ``Suite'' columns shows the benchmark suite,
     and the ``EG Size'' shows the number of \enodes
     in the \egraph used to benchmark.
    The ``\GJ'' and ``\EM'' columns show how many patterns
     that algorithm was fastest on in this configuration.
    ``Total'' shows the cumulative speedup across all patterns
      in that configuration.
    The remaining columns show statistics about the \EM/\GJ ratios
      for each pattern: harmonic mean, geometric mean, max, median, and min.
  }
  \rowcolors{2}{blue!20}{orange!20}
  \begin{tabular}{|llrrrrrrrrr|}
    \hline
    Idx  & Suite  & EG Size & \GJ & \EM & Total & HMean & GMean & Best  & Medn  & Worst   \\
    \hline
    +     & lambda & 4,142   & 15 & 3  & 1.69   & .84   & 1.71  & 13.62   & 1.60  & .12   \\
    --    & lambda & 4,142   & 18 & 0  & 2.58   & 2.99  & 4.23  & 39.17   & 3.68  & 1.10  \\
    +     & lambda & 57,454  & 16 & 2  & 2.60   & .95   & 2.66  & 136.54  & 2.65  & .12   \\
    --    & lambda & 57,454  & 18 & 0  & 2.87   & 3.33  & 9.11  & 406.70  & 4.05  & 1.03  \\
    +     & lambda & 109,493 & 15 & 3  & 1.66   & 1.75  & 3.11  & 148.96  & 2.03  & .65   \\
    --    & lambda & 109,493 & 18 & 0  & 1.70   & 3.32  & 7.46  & 291.18  & 4.10  & 1.05  \\
    +     & lambda & 213,345 & 15 & 3  & 2.20   & 1.55  & 3.40  & 304.33  & 1.72  & .43   \\
    --    & lambda & 213,345 & 18 & 0  & 2.21   & 2.96  & 8.23  & 501.12  & 5.04  & 1.04  \\
    \hline \rowcolor{white} \multicolumn{11}{c}{} \\
    \hline \rowcolor{white}
    Idx & Suite & EG Size & \GJ & \EM & Total  & HMean & GMean & Best         & Medn  & Worst \\
    \hline
    +   & math  & 8,205   & 30  & 2   & 5.49   & 0.64  & 4.61  & 66.54        & 2.79  & 0.03  \\
    --  & math  & 8,205   & 30  & 2   & 5.21   & 2.93  & 8.62  & 1,630.00     & 5.48  & 0.62  \\
    +   & math  & 53,286  & 29  & 3   & 311.23 & 2.61  & 13.50 & 50,030.29    & 3.62  & 0.74  \\
    --  & math  & 53,286  & 30  & 2   & 318.95 & 3.39  & 29.60 & 1,325,802.56 & 30.72 & 0.74  \\
    +   & math  & 132,080 & 29  & 3   & 96.55  & 2.66  & 15.18 & 61,488.73    & 4.02  & 0.60  \\
    --  & math  & 132,080 & 30  & 2   & 97.84  & 3.46  & 34.16 & 2,447,939.38 & 68.71 & 0.75  \\
    +   & math  & 217,396 & 30  & 2   & 119.82 & 2.83  & 18.34 & 101,023.37   & 3.91  & 0.72  \\
    --  & math  & 217,396 & 31  & 1   & 119.73 & 3.45  & 41.35 & 8,575,830.58 & 80.84 & 0.76  \\
    \hline
  \end{tabular}
  \vspace{1em}
  \label{tab:eval}
\end{table}

\subsection{Results}

\autoref{fig:speedup}
 show the results of our benchmarking experiments.
\GJ can be over 6 orders of magnitude faster
 than traditional e-matching on complex patterns.
Speedup tends to be greater when the output size is smaller,
 and when the pattern is larger and non-linear.
A large output indicates the \egraph\ is densely populated
 with terms matching the given pattern,
 therefore backtracking search wastes little time on unmatched terms,
 and using relational \ematching contributes little or no speedup.
Large and complex (non-linear) patterns require careful query planning
 to be processed efficiently.
For example, the pattern experiencing the largest speedup in~\autoref{fig:speedup}
 is 4 \enodes\ deep with 4 occurrences of the variable $f$.
Relational \ematching\ using generic join can devise a variable ordering
 to put smaller relations with fewer children on the outer loop,
 thereby pruning down a large search space early.
In contrast, backtracking search must traverse the \egraph\
 top-down.

In some cases index building time takes a significant proportion of the run time
 in relational \ematching, sometimes offsetting the gains.
Overall, relational \ematching\ remains competitive with the index building overhead.
In~\autoref{sec:incremental} we discuss potential remedies to alleviate this overhead.

\autoref{tab:eval}
 shows summary statistics across all patterns for each benchmark configuration.
Notably, \GJ is faster across patterns in
 every benchmarking configuration (the ``Total'' column).
Much of the total benchmarking run time is dominated by
 simple, linear patterns (e.g. \texttt{(+ (+ a b) c)})
 that return many results.
Terms matching such patterns come to dominate
 the \egraph\ over time, due to the explosive expansion of
 associativity and commutativity.
As a result, the total speedup does not necessarily increase
 as the \egraph\ grows,
 whereas the best speedup as well as different average statistics
 steadily increase.

In summary, relational \ematching\ is almost always faster than backtracking search,
and is especially effective in speeding up complex patterns.


\section{Discussion}
\label{sec:discussion}

The relational model for \ematching is
 not only simple and fast,
 but it opens the door to a wide range of future work
 to further improve \egraphs and \ematching.

\subsection{Relational Representation of Code}

\update{
The representation of \enodes as relations and patterns as
 relational queries is simple and natural.
It may even feel obvious to someone familiar with Prolog-style programming,
 where a function with arity $k$ is often written as a relation with arity $k+1$, 
 with 1 extra argument for the output.
This relational representation can also be found in the literature on congruence closure.
For example, \citet{rummer2012matching} uses the same encoding to 
 simulate congruence closure procedures with a hyper-resolution calculus.
Unlike our focus on the performance of \ematching, 
 their goal is to improve the completeness of quantified reasoning in SMT solvers.
Research on large-scale code search and analysis~\cite{urma2013expressive, avgustinov2016ql, doop, Glean}
 has also explored storing program repositories in a database.
Programmers may issue queries against the database 
 to find code that match certain patterns, 
 which may be examples of API usage or 
 ``anti-patterns'' that pose security risks.
Certain complex queries can even express sophisticated analyses like 
 the points-to analysis~\cite{doop, avgustinov2016ql}. 
These code search and analysis engines need to balance the 
 need for expressiveness, efficiency, and ease of use.

While the idea to represent code as relations is not new, 
 we are the first, to the best of our knowledge, 
 to leverage relational join algorithms to speed up \ematching.
We designed optimizations specialized for \ematching, 
 and also derived the first non-trivial complexity bound
 for \ematching from a careful analysis of the generic join algorithm.
}
\subsection{Pushdown Optimization}

An \ematching pattern may have additional filtering
 condition associated with it.
For example, a rewrite rule with left hand side \verb|(* (/ x y) y)| may
 additionally require that $y\neq 0$.
When the variables involved in conditions all occur in a single relation (e.g.,
 $R_*$ and $R_/$),
 this relation can be effectively filtered
 even before being joined
 (e.g., using predicate $\find(\sigma(y))\neq\find(\lookup(0))$),
 which could immediately prune a large search space.

We call this \textit{pushdown optimization},
 which can be considered as
 \egraph's version of the relational query
 optimization that always pushes the filter operations down to the bottom
 of the join tree.
Note
 that the ability to do pushdown optimization stems
 from relational \ematching's ability to consider the constraints in any order;
 backtracking \ematching could not support this technique.
We currently do
 not implement this,
 because it requires breaking
 changes to \egg's interface.

Conditions that involve multiple variables can be ``pushed down'' as well.
In generic join,
 the filter can occur immediately after the variables appear in the variable ordering.
Thus,
 an implementation that supports these conditional filters
 should take this into account when determining variable ordering.


\subsection{Join Algorithms}
Research in databases has proposed a myriad of different join algorithms.
For example, state-of-the-art database systems implement two-way joins
 like hash join and merge-sort join.
They have a longer history than generic join,
 and benefit from various constant factor optimizations.
Extensive research has focused on generating highly efficient query plans
 using two-way joins.
On the other hand, Yannakakis' algorithm~\cite{yannakakis}
 is proven to be optimal on a class of queries called {\em full ayclic queries},
 running in time linear to the total size of the input and the output.
All linear patterns correspond to acyclic queries,
 but some nonlinear patterns correspond to cyclic ones.
 Recent research~\cite{mhedhbi19,
 freitag} has also experimented with combining
 traditional join algorithms with generic join,
 achieving good performance.
In this paper we choose generic join for its simplicity,
 and future work may consider other join algorithms for relational \ematching.

\subsection{Incremental Processing}\label{sec:incremental}
We have focused on improving the core \ematching\ algorithm in this paper,
 yet prior work has successfully sped up \ematching\ by making it incremental~\cite{efficient-ematching}.
When the changes to the \egraph\ are small and the results of \ematching\
 patterns are frequently queried,
 maintaining the already-discovered matches becomes crucial for efficiency.
From our relational perspective, incremental \ematching\
 is captured precisely by the classic problem of
 incremental view maintenance (IVM)~\cite{DBLP:conf/vldb/CeriW91,
 DBLP:conf/sigmod/SalemBCL00,DBLP:conf/sigmod/ZhugeGHW95} in databases.
IVM aims to efficiently update an already-computed query result upon changes
 to the database, without recomputing the query from scratch.
\update{
Future research can follow our approach but implement e-graphs directly on top of
 a relational database engine, 
 inheriting the incrementality from the underlying system. 
For example, a datalog engine provides semi-naive evaluation.}

There is opportunity to improve relational \ematching\ even without a
 full-fledged IVM solution.
For example, we have shown in \autoref{fig:speedup} that index building
 can take up a significant portion of the run time.
Our index implementation is based on a hash trie,
 which is simple but difficult to update efficiently.
We are experimenting with an alternative index design based on sort tries,
 in the hope that it can make updates as simple as inserting into a sorted array.

\subsection{Building on Existing Database Systems}
Given our reduction from \ematching\ to conjunctive query answering,
 one may wonder if other \egraph operations could be reduced to
 relational operations so that a fully functioning \egraph engine
 can be implemented purely on top of an off-the-shelf database system.
There are many benefits to it.
For example, we could enjoy an industrial-strength query optimization
 and execution engine for free
 (although most industrial databases do not use worst-case optimal join algorithms),
 and eliminates the cost of transforming an \egraph
 to a relational database.
Moreover, this approach would enjoy any properties of the host database
 system, including persistence, incremental maintenance, concurrency, and fault-tolerance.

As a proof of concept, we implemented a prototype \egraph implementation on top of SQLite,
 an embedded relational database system, with 160 lines of Racket code.
\Egraph operations like insertion and merging are translated into
 high-level SQL queries and executed
 using SQLite.
This na\"ive prototype is not competitive with highly optimized
 implementations like \egg,
 especially given our relational \ematching approach.
However, with appropriate indices and query plan,
 it could have similar asymptotic performance.
Specialized data structures to represent equivalence relations~\cite{nappa2019fast}
 could also help performance.
Therefore, not only \ematching but also other \egraph operations can be
 expressed as relational queries,
 which hints at the possibility of developing real-world \egraph engines
 on top of existing relational database systems.

\section{Conclusion}
\label{sec:conclusion}

In this paper, we present relational \ematching, a novel \ematching~algorithm
 that is conceptually simpler, asymptotically faster, and worst-case optimal.
We reduce \ematching\ to conjunctive queries answering,
 a well-studied problem in databases research.
This relational presentation provides a unified way
 to exploit in query planning not only structural constraints,
 but also equality constraints,
 which are constraints that traditional \ematching\ algorithms cannot effectively leverage.
We implement relational \ematching\ with the worst-case optimal generic join algorithm,
 using which we derive the first data complexity for \ematching.
We integrate our implementation in the state-of-the-art equality saturation engine
\egg, and show relational \ematching\ to be flexible (readily supports multi-patterns)
and efficient (achieves orders of magnitude speedup).

\begin{acks}
This work was supported by the Applications Driving Architectures (ADA)
Research Center, a JUMP Center co-sponsored by SRC and DARPA. This material is based upon work supported by the National Science Foundation
under Grant No. 1749570. Any opinions, findings, and conclusions or recommendations expressed in this
material are those of the author(s) and do not necessarily reflect the views of
the National Science Foundation.
\end{acks}

\bibliographystyle{ACM-Reference-Format}
\bibliography{main}


\begin{thebibliography}{28}


\ifx \showCODEN    \undefined \def \showCODEN     #1{\unskip}     \fi
\ifx \showDOI      \undefined \def \showDOI       #1{#1}\fi
\ifx \showISBNx    \undefined \def \showISBNx     #1{\unskip}     \fi
\ifx \showISBNxiii \undefined \def \showISBNxiii  #1{\unskip}     \fi
\ifx \showISSN     \undefined \def \showISSN      #1{\unskip}     \fi
\ifx \showLCCN     \undefined \def \showLCCN      #1{\unskip}     \fi
\ifx \shownote     \undefined \def \shownote      #1{#1}          \fi
\ifx \showarticletitle \undefined \def \showarticletitle #1{#1}   \fi
\ifx \showURL      \undefined \def \showURL       {\relax}        \fi
\providecommand\bibfield[2]{#2}
\providecommand\bibinfo[2]{#2}
\providecommand\natexlab[1]{#1}
\providecommand\showeprint[2][]{arXiv:#2}

\bibitem[\protect\citeauthoryear{??}{Gle}{[n.d.]}]%
        {Glean}
 \bibinfo{year}{[n.d.]}\natexlab{}.
\newblock \bibinfo{title}{{Glean} System for collecting, deriving and querying
  facts about source code}.
\newblock \bibinfo{howpublished}{\url{https://glean.software}}.
\newblock
\newblock
\shownote{Accessed: 2021-10-12.}


\bibitem[\protect\citeauthoryear{Aberger, Lamb, Tu, N\"{o}tzli, Olukotun, and
  R\'{e}}{Aberger et~al\mbox{.}}{2017}]%
        {emptyheaded}
\bibfield{author}{\bibinfo{person}{Christopher~R. Aberger},
  \bibinfo{person}{Andrew Lamb}, \bibinfo{person}{Susan Tu},
  \bibinfo{person}{Andres N\"{o}tzli}, \bibinfo{person}{Kunle Olukotun}, {and}
  \bibinfo{person}{Christopher R\'{e}}.} \bibinfo{year}{2017}\natexlab{}.
\newblock \showarticletitle{EmptyHeaded: A Relational Engine for Graph
  Processing}.
\newblock \bibinfo{journal}{\emph{ACM Trans. Database Syst.}}
  \bibinfo{volume}{42}, \bibinfo{number}{4}, Article \bibinfo{articleno}{20}
  (\bibinfo{date}{Oct.} \bibinfo{year}{2017}), \bibinfo{numpages}{44}~pages.
\newblock
\showISSN{0362-5915}
\urldef\tempurl%
\url{https://doi.org/10.1145/3129246}
\showDOI{\tempurl}


\bibitem[\protect\citeauthoryear{Amler}{Amler}{2017}]%
        {eval-wcoj}
\bibfield{author}{\bibinfo{person}{Andreas Amler}.}
  \bibinfo{year}{2017}\natexlab{}.
\newblock \emph{\bibinfo{title}{Evaluation of Worst-Case Optimal Join
  Algorithm}}.
\newblock \bibinfo{thesistype}{Master's\ thesis}.
\newblock


\bibitem[\protect\citeauthoryear{Antoniadis, Triantafyllou, and
  Smaragdakis}{Antoniadis et~al\mbox{.}}{2017}]%
        {doop}
\bibfield{author}{\bibinfo{person}{Tony Antoniadis},
  \bibinfo{person}{Konstantinos Triantafyllou}, {and} \bibinfo{person}{Yannis
  Smaragdakis}.} \bibinfo{year}{2017}\natexlab{}.
\newblock \showarticletitle{Porting Doop to Souffl\'{e}: A Tale of Inter-Engine
  Portability for Datalog-Based Analyses}. In
  \bibinfo{booktitle}{\emph{Proceedings of the 6th ACM SIGPLAN International
  Workshop on State Of the Art in Program Analysis}} (Barcelona, Spain)
  \emph{(\bibinfo{series}{SOAP 2017})}. \bibinfo{publisher}{Association for
  Computing Machinery}, \bibinfo{address}{New York, NY, USA},
  \bibinfo{pages}{25–30}.
\newblock
\showISBNx{9781450350723}
\urldef\tempurl%
\url{https://doi.org/10.1145/3088515.3088522}
\showDOI{\tempurl}


\bibitem[\protect\citeauthoryear{Atserias, Grohe, and Marx}{Atserias
  et~al\mbox{.}}{2008}]%
        {agm}
\bibfield{author}{\bibinfo{person}{Albert Atserias}, \bibinfo{person}{Martin
  Grohe}, {and} \bibinfo{person}{D\'{a}niel Marx}.}
  \bibinfo{year}{2008}\natexlab{}.
\newblock \showarticletitle{Size Bounds and Query Plans for Relational Joins}.
  In \bibinfo{booktitle}{\emph{Proceedings of the 2008 49th Annual IEEE
  Symposium on Foundations of Computer Science}} \emph{(\bibinfo{series}{FOCS
  '08})}. \bibinfo{publisher}{IEEE Computer Society}, \bibinfo{address}{USA},
  \bibinfo{pages}{739–748}.
\newblock
\showISBNx{9780769534367}
\urldef\tempurl%
\url{https://doi.org/10.1109/FOCS.2008.43}
\showDOI{\tempurl}


\bibitem[\protect\citeauthoryear{Avgustinov, De~Moor, Jones, and
  Sch{\"a}fer}{Avgustinov et~al\mbox{.}}{2016}]%
        {avgustinov2016ql}
\bibfield{author}{\bibinfo{person}{Pavel Avgustinov}, \bibinfo{person}{Oege
  De~Moor}, \bibinfo{person}{Michael~Peyton Jones}, {and} \bibinfo{person}{Max
  Sch{\"a}fer}.} \bibinfo{year}{2016}\natexlab{}.
\newblock \showarticletitle{QL: Object-oriented queries on relational data}. In
  \bibinfo{booktitle}{\emph{30th European Conference on Object-Oriented
  Programming (ECOOP 2016)}}. Schloss Dagstuhl-Leibniz-Zentrum fuer Informatik.
\newblock


\bibitem[\protect\citeauthoryear{Barrett, Conway, Deters, Hadarean, Jovanovic,
  King, Reynolds, and Tinelli}{Barrett et~al\mbox{.}}{2011}]%
        {cvc4}
\bibfield{author}{\bibinfo{person}{Clark~W. Barrett},
  \bibinfo{person}{Christopher~L. Conway}, \bibinfo{person}{Morgan Deters},
  \bibinfo{person}{Liana Hadarean}, \bibinfo{person}{Dejan Jovanovic},
  \bibinfo{person}{Tim King}, \bibinfo{person}{Andrew Reynolds}, {and}
  \bibinfo{person}{Cesare Tinelli}.} \bibinfo{year}{2011}\natexlab{}.
\newblock \showarticletitle{{CVC4}}. In \bibinfo{booktitle}{\emph{Computer
  Aided Verification - 23rd International Conference, {CAV} 2011, Snowbird, UT,
  USA, July 14-20, 2011. Proceedings}} \emph{(\bibinfo{series}{Lecture Notes in
  Computer Science}, Vol.~\bibinfo{volume}{6806})},
  \bibfield{editor}{\bibinfo{person}{Ganesh Gopalakrishnan} {and}
  \bibinfo{person}{Shaz Qadeer}} (Eds.). \bibinfo{publisher}{Springer},
  \bibinfo{pages}{171--177}.
\newblock
\urldef\tempurl%
\url{https://doi.org/10.1007/978-3-642-22110-1\_14}
\showDOI{\tempurl}


\bibitem[\protect\citeauthoryear{Ceri and Widom}{Ceri and Widom}{1991}]%
        {DBLP:conf/vldb/CeriW91}
\bibfield{author}{\bibinfo{person}{Stefano Ceri} {and}
  \bibinfo{person}{Jennifer Widom}.} \bibinfo{year}{1991}\natexlab{}.
\newblock \showarticletitle{Deriving Production Rules for Incremental View
  Maintenance}. In \bibinfo{booktitle}{\emph{17th International Conference on
  Very Large Data Bases, September 3-6, 1991, Barcelona, Catalonia, Spain,
  Proceedings}}, \bibfield{editor}{\bibinfo{person}{Guy~M. Lohman},
  \bibinfo{person}{Am{\'{\i}}lcar Sernadas}, {and} \bibinfo{person}{Rafael
  Camps}} (Eds.). \bibinfo{publisher}{Morgan Kaufmann},
  \bibinfo{pages}{577--589}.
\newblock
\showISBNx{1-55860-150-3}
\urldef\tempurl%
\url{http://www.vldb.org/conf/1991/P577.PDF}
\showURL{%
\tempurl}


\bibitem[\protect\citeauthoryear{de~Moura and Bj{\o}rner}{de~Moura and
  Bj{\o}rner}{2007}]%
        {efficient-ematching}
\bibfield{author}{\bibinfo{person}{Leonardo de Moura} {and}
  \bibinfo{person}{Nikolaj Bj{\o}rner}.} \bibinfo{year}{2007}\natexlab{}.
\newblock \showarticletitle{Efficient E-Matching for SMT Solvers}. In
  \bibinfo{booktitle}{\emph{Automated Deduction -- CADE-21}},
  \bibfield{editor}{\bibinfo{person}{Frank Pfenning}} (Ed.).
  \bibinfo{publisher}{Springer Berlin Heidelberg}, \bibinfo{address}{Berlin,
  Heidelberg}, \bibinfo{pages}{183--198}.
\newblock
\showISBNx{978-3-540-73595-3}


\bibitem[\protect\citeauthoryear{de~Moura and Bj{\o}rner}{de~Moura and
  Bj{\o}rner}{2008}]%
        {z3}
\bibfield{author}{\bibinfo{person}{Leonardo de Moura} {and}
  \bibinfo{person}{Nikolaj Bj{\o}rner}.} \bibinfo{year}{2008}\natexlab{}.
\newblock \showarticletitle{Z3: An Efficient SMT Solver}. In
  \bibinfo{booktitle}{\emph{Tools and Algorithms for the Construction and
  Analysis of Systems}}, \bibfield{editor}{\bibinfo{person}{C.~R. Ramakrishnan}
  {and} \bibinfo{person}{Jakob Rehof}} (Eds.). \bibinfo{publisher}{Springer
  Berlin Heidelberg}, \bibinfo{address}{Berlin, Heidelberg},
  \bibinfo{pages}{337--340}.
\newblock
\showISBNx{978-3-540-78800-3}


\bibitem[\protect\citeauthoryear{Detlefs, Nelson, and Saxe}{Detlefs
  et~al\mbox{.}}{2005}]%
        {simplify}
\bibfield{author}{\bibinfo{person}{David Detlefs}, \bibinfo{person}{Greg
  Nelson}, {and} \bibinfo{person}{James~B. Saxe}.}
  \bibinfo{year}{2005}\natexlab{}.
\newblock \showarticletitle{Simplify: A Theorem Prover for Program Checking}.
\newblock \bibinfo{journal}{\emph{J. ACM}} \bibinfo{volume}{52},
  \bibinfo{number}{3} (\bibinfo{date}{May} \bibinfo{year}{2005}),
  \bibinfo{pages}{365–473}.
\newblock
\showISSN{0004-5411}
\urldef\tempurl%
\url{https://doi.org/10.1145/1066100.1066102}
\showDOI{\tempurl}


\bibitem[\protect\citeauthoryear{Freitag, Bandle, Schmidt, Kemper, and
  Neumann}{Freitag et~al\mbox{.}}{2020}]%
        {freitag}
\bibfield{author}{\bibinfo{person}{Michael Freitag},
  \bibinfo{person}{Maximilian Bandle}, \bibinfo{person}{Tobias Schmidt},
  \bibinfo{person}{Alfons Kemper}, {and} \bibinfo{person}{Thomas Neumann}.}
  \bibinfo{year}{2020}\natexlab{}.
\newblock \showarticletitle{Adopting Worst-Case Optimal Joins in Relational
  Database Systems}.
\newblock \bibinfo{journal}{\emph{Proc. VLDB Endow.}} \bibinfo{volume}{13},
  \bibinfo{number}{12} (\bibinfo{date}{July} \bibinfo{year}{2020}),
  \bibinfo{pages}{1891–1904}.
\newblock
\showISSN{2150-8097}
\urldef\tempurl%
\url{https://doi.org/10.14778/3407790.3407797}
\showDOI{\tempurl}


\bibitem[\protect\citeauthoryear{Kozen}{Kozen}{1977}]%
        {ematching-nph}
\bibfield{author}{\bibinfo{person}{Dexter Kozen}.}
  \bibinfo{year}{1977}\natexlab{}.
\newblock \showarticletitle{Complexity of Finitely Presented Algebras}. In
  \bibinfo{booktitle}{\emph{Proceedings of the Ninth Annual ACM Symposium on
  Theory of Computing}} (Boulder, Colorado, USA) \emph{(\bibinfo{series}{STOC
  '77})}. \bibinfo{publisher}{Association for Computing Machinery},
  \bibinfo{address}{New York, NY, USA}, \bibinfo{pages}{164–177}.
\newblock
\showISBNx{9781450374095}
\urldef\tempurl%
\url{https://doi.org/10.1145/800105.803406}
\showDOI{\tempurl}


\bibitem[\protect\citeauthoryear{Mhedhbi and Salihoglu}{Mhedhbi and
  Salihoglu}{2019}]%
        {mhedhbi19}
\bibfield{author}{\bibinfo{person}{Amine Mhedhbi} {and} \bibinfo{person}{Semih
  Salihoglu}.} \bibinfo{year}{2019}\natexlab{}.
\newblock \bibinfo{booktitle}{\emph{{Optimizing Subgraph Queries by Combining
  Binary and Worst-Case Optimal Joins}}}.
\newblock
\showeprint[arxiv]{1903.02076v2}~[cs.DB]


\bibitem[\protect\citeauthoryear{Moskal, \L{}opusza\'{n}ski, and Kiniry}{Moskal
  et~al\mbox{.}}{2008}]%
        {moskal}
\bibfield{author}{\bibinfo{person}{Micha\l{} Moskal}, \bibinfo{person}{Jakub
  \L{}opusza\'{n}ski}, {and} \bibinfo{person}{Joseph~R. Kiniry}.}
  \bibinfo{year}{2008}\natexlab{}.
\newblock \showarticletitle{E-Matching for Fun and Profit}.
\newblock \bibinfo{journal}{\emph{Electron. Notes Theor. Comput. Sci.}}
  \bibinfo{volume}{198}, \bibinfo{number}{2} (\bibinfo{date}{May}
  \bibinfo{year}{2008}), \bibinfo{pages}{19–35}.
\newblock
\showISSN{1571-0661}
\urldef\tempurl%
\url{https://doi.org/10.1016/j.entcs.2008.04.078}
\showDOI{\tempurl}


\bibitem[\protect\citeauthoryear{Nandi, Willsey, Anderson, Wilcox, Darulova,
  Grossman, and Tatlock}{Nandi et~al\mbox{.}}{2020}]%
        {szalinski}
\bibfield{author}{\bibinfo{person}{Chandrakana Nandi}, \bibinfo{person}{Max
  Willsey}, \bibinfo{person}{Adam Anderson}, \bibinfo{person}{James~R. Wilcox},
  \bibinfo{person}{Eva Darulova}, \bibinfo{person}{Dan Grossman}, {and}
  \bibinfo{person}{Zachary Tatlock}.} \bibinfo{year}{2020}\natexlab{}.
\newblock \showarticletitle{Synthesizing Structured {CAD} Models with Equality
  Saturation and Inverse Transformations}. In
  \bibinfo{booktitle}{\emph{Proceedings of the 41st {ACM} {SIGPLAN}
  International Conference on Programming Language Design and Implementation,
  {PLDI} 2020, London, UK, June 15-20, 2020}},
  \bibfield{editor}{\bibinfo{person}{Alastair~F. Donaldson} {and}
  \bibinfo{person}{Emina Torlak}} (Eds.). \bibinfo{publisher}{{ACM}},
  \bibinfo{pages}{31--44}.
\newblock
\urldef\tempurl%
\url{https://doi.org/10.1145/3385412.3386012}
\showDOI{\tempurl}


\bibitem[\protect\citeauthoryear{Nappa, Zhao, Suboti{\'c}, and Scholz}{Nappa
  et~al\mbox{.}}{2019}]%
        {nappa2019fast}
\bibfield{author}{\bibinfo{person}{Patrick Nappa}, \bibinfo{person}{David
  Zhao}, \bibinfo{person}{Pavle Suboti{\'c}}, {and} \bibinfo{person}{Bernhard
  Scholz}.} \bibinfo{year}{2019}\natexlab{}.
\newblock \showarticletitle{Fast Parallel Equivalence Relations in a Datalog
  Compiler}. In \bibinfo{booktitle}{\emph{2019 28th International Conference on
  Parallel Architectures and Compilation Techniques (PACT)}}. IEEE,
  \bibinfo{pages}{82--96}.
\newblock


\bibitem[\protect\citeauthoryear{Ngo, Porat, R\'{e}, and Rudra}{Ngo
  et~al\mbox{.}}{2018}]%
        {wcoj}
\bibfield{author}{\bibinfo{person}{Hung~Q. Ngo}, \bibinfo{person}{Ely Porat},
  \bibinfo{person}{Christopher R\'{e}}, {and} \bibinfo{person}{Atri Rudra}.}
  \bibinfo{year}{2018}\natexlab{}.
\newblock \showarticletitle{Worst-Case Optimal Join Algorithms}.
\newblock \bibinfo{journal}{\emph{J. ACM}} \bibinfo{volume}{65},
  \bibinfo{number}{3}, Article \bibinfo{articleno}{16} (\bibinfo{date}{March}
  \bibinfo{year}{2018}), \bibinfo{numpages}{40}~pages.
\newblock
\showISSN{0004-5411}
\urldef\tempurl%
\url{https://doi.org/10.1145/3180143}
\showDOI{\tempurl}


\bibitem[\protect\citeauthoryear{Ngo, R\'{e}, and Rudra}{Ngo
  et~al\mbox{.}}{2014}]%
        {gj}
\bibfield{author}{\bibinfo{person}{Hung~Q Ngo}, \bibinfo{person}{Christopher
  R\'{e}}, {and} \bibinfo{person}{Atri Rudra}.}
  \bibinfo{year}{2014}\natexlab{}.
\newblock \showarticletitle{Skew Strikes Back: New Developments in the Theory
  of Join Algorithms}.
\newblock \bibinfo{journal}{\emph{SIGMOD Rec.}} \bibinfo{volume}{42},
  \bibinfo{number}{4} (\bibinfo{date}{Feb.} \bibinfo{year}{2014}),
  \bibinfo{pages}{5–16}.
\newblock
\showISSN{0163-5808}
\urldef\tempurl%
\url{https://doi.org/10.1145/2590989.2590991}
\showDOI{\tempurl}


\bibitem[\protect\citeauthoryear{Papadimitriou and Yannakakis}{Papadimitriou
  and Yannakakis}{1999}]%
        {yannakakis}
\bibfield{author}{\bibinfo{person}{Christos~H. Papadimitriou} {and}
  \bibinfo{person}{Mihalis Yannakakis}.} \bibinfo{year}{1999}\natexlab{}.
\newblock \showarticletitle{On the Complexity of Database Queries}.
\newblock \bibinfo{journal}{\emph{J. Comput. Syst. Sci.}} \bibinfo{volume}{58},
  \bibinfo{number}{3} (\bibinfo{date}{June} \bibinfo{year}{1999}),
  \bibinfo{pages}{407–427}.
\newblock
\showISSN{0022-0000}
\urldef\tempurl%
\url{https://doi.org/10.1006/jcss.1999.1626}
\showDOI{\tempurl}


\bibitem[\protect\citeauthoryear{R{\"u}mmer}{R{\"u}mmer}{2012}]%
        {rummer2012matching}
\bibfield{author}{\bibinfo{person}{Philipp R{\"u}mmer}.}
  \bibinfo{year}{2012}\natexlab{}.
\newblock \showarticletitle{E-matching with free variables}. In
  \bibinfo{booktitle}{\emph{International Conference on Logic for Programming
  Artificial Intelligence and Reasoning}}. Springer, \bibinfo{pages}{359--374}.
\newblock


\bibitem[\protect\citeauthoryear{Salem, Beyer, Cochrane, and Lindsay}{Salem
  et~al\mbox{.}}{2000}]%
        {DBLP:conf/sigmod/SalemBCL00}
\bibfield{author}{\bibinfo{person}{Kenneth Salem}, \bibinfo{person}{Kevin~S.
  Beyer}, \bibinfo{person}{Roberta Cochrane}, {and} \bibinfo{person}{Bruce~G.
  Lindsay}.} \bibinfo{year}{2000}\natexlab{}.
\newblock \showarticletitle{How To Roll a Join: Asynchronous Incremental View
  Maintenance}. In \bibinfo{booktitle}{\emph{Proceedings of the 2000 {ACM}
  {SIGMOD} International Conference on Management of Data, May 16-18, 2000,
  Dallas, Texas, {USA}}}, \bibfield{editor}{\bibinfo{person}{Weidong Chen},
  \bibinfo{person}{Jeffrey~F. Naughton}, {and} \bibinfo{person}{Philip~A.
  Bernstein}} (Eds.). \bibinfo{publisher}{{ACM}}, \bibinfo{pages}{129--140}.
\newblock
\showISBNx{1-58113-217-4}
\urldef\tempurl%
\url{https://doi.org/10.1145/342009.335393}
\showDOI{\tempurl}


\bibitem[\protect\citeauthoryear{Tarjan}{Tarjan}{1975}]%
        {unionfind}
\bibfield{author}{\bibinfo{person}{Robert~Endre Tarjan}.}
  \bibinfo{year}{1975}\natexlab{}.
\newblock \showarticletitle{Efficiency of a Good But Not Linear Set Union
  Algorithm}.
\newblock \bibinfo{journal}{\emph{J. ACM}} \bibinfo{volume}{22},
  \bibinfo{number}{2} (\bibinfo{date}{April} \bibinfo{year}{1975}),
  \bibinfo{pages}{215–225}.
\newblock
\showISSN{0004-5411}
\urldef\tempurl%
\url{https://doi.org/10.1145/321879.321884}
\showDOI{\tempurl}


\bibitem[\protect\citeauthoryear{Tate, Stepp, Tatlock, and Lerner}{Tate
  et~al\mbox{.}}{2009}]%
        {eqsat}
\bibfield{author}{\bibinfo{person}{Ross Tate}, \bibinfo{person}{Michael Stepp},
  \bibinfo{person}{Zachary Tatlock}, {and} \bibinfo{person}{Sorin Lerner}.}
  \bibinfo{year}{2009}\natexlab{}.
\newblock \showarticletitle{Equality Saturation: A New Approach to
  Optimization}. In \bibinfo{booktitle}{\emph{Proceedings of the 36th Annual
  ACM SIGPLAN-SIGACT Symposium on Principles of Programming Languages}}
  (Savannah, GA, USA) \emph{(\bibinfo{series}{POPL '09})}.
  \bibinfo{publisher}{Association for Computing Machinery},
  \bibinfo{address}{New York, NY, USA}, \bibinfo{pages}{264–276}.
\newblock
\showISBNx{9781605583792}
\urldef\tempurl%
\url{https://doi.org/10.1145/1480881.1480915}
\showDOI{\tempurl}


\bibitem[\protect\citeauthoryear{Urma and Mycroft}{Urma and Mycroft}{2013}]%
        {urma2013expressive}
\bibfield{author}{\bibinfo{person}{Raoul-Gabriel Urma} {and}
  \bibinfo{person}{Alan Mycroft}.} \bibinfo{year}{2013}\natexlab{}.
\newblock \showarticletitle{Expressive and Scalable Source Code Queries with
  Graph Databases}.
\newblock  (\bibinfo{year}{2013}).
\newblock


\bibitem[\protect\citeauthoryear{Willsey, Nandi, Wang, Flatt, Tatlock, and
  Panchekha}{Willsey et~al\mbox{.}}{2021}]%
        {egg}
\bibfield{author}{\bibinfo{person}{Max Willsey}, \bibinfo{person}{Chandrakana
  Nandi}, \bibinfo{person}{Yisu~Remy Wang}, \bibinfo{person}{Oliver Flatt},
  \bibinfo{person}{Zachary Tatlock}, {and} \bibinfo{person}{Pavel Panchekha}.}
  \bibinfo{year}{2021}\natexlab{}.
\newblock \showarticletitle{Egg: Fast and Extensible Equality Saturation}.
\newblock \bibinfo{journal}{\emph{Proc. ACM Program. Lang.}}
  \bibinfo{volume}{5}, \bibinfo{number}{POPL}, Article \bibinfo{articleno}{23}
  (\bibinfo{date}{Jan.} \bibinfo{year}{2021}), \bibinfo{numpages}{29}~pages.
\newblock
\urldef\tempurl%
\url{https://doi.org/10.1145/3434304}
\showDOI{\tempurl}


\bibitem[\protect\citeauthoryear{{Yang}, {Mangpo Phothilimtha}, {Remy Wang},
  {Willsey}, {Roy}, and {Pienaar}}{{Yang} et~al\mbox{.}}{2021}]%
        {tensat}
\bibfield{author}{\bibinfo{person}{Yichen {Yang}}, \bibinfo{person}{Phitchaya
  {Mangpo Phothilimtha}}, \bibinfo{person}{Yisu {Remy Wang}},
  \bibinfo{person}{Max {Willsey}}, \bibinfo{person}{Sudip {Roy}}, {and}
  \bibinfo{person}{Jacques {Pienaar}}.} \bibinfo{year}{2021}\natexlab{}.
\newblock \showarticletitle{{Equality Saturation for Tensor Graph
  Superoptimization}}.
\newblock \bibinfo{journal}{\emph{arXiv e-prints}}, Article
  \bibinfo{articleno}{arXiv:2101.01332} (\bibinfo{date}{Jan.}
  \bibinfo{year}{2021}), \bibinfo{numpages}{arXiv:2101.01332}~pages.
\newblock
\showeprint[arxiv]{2101.01332}~[cs.AI]


\bibitem[\protect\citeauthoryear{Zhuge, Garcia{-}Molina, Hammer, and
  Widom}{Zhuge et~al\mbox{.}}{1995}]%
        {DBLP:conf/sigmod/ZhugeGHW95}
\bibfield{author}{\bibinfo{person}{Yue Zhuge}, \bibinfo{person}{Hector
  Garcia{-}Molina}, \bibinfo{person}{Joachim Hammer}, {and}
  \bibinfo{person}{Jennifer Widom}.} \bibinfo{year}{1995}\natexlab{}.
\newblock \showarticletitle{View Maintenance in a Warehousing Environment}. In
  \bibinfo{booktitle}{\emph{Proceedings of the 1995 {ACM} {SIGMOD}
  International Conference on Management of Data, San Jose, California, USA,
  May 22-25, 1995}}, \bibfield{editor}{\bibinfo{person}{Michael~J. Carey} {and}
  \bibinfo{person}{Donovan~A. Schneider}} (Eds.). \bibinfo{publisher}{{ACM}
  Press}, \bibinfo{pages}{316--327}.
\newblock
\showISBNx{978-0-89791-731-5}
\urldef\tempurl%
\url{https://doi.org/10.1145/223784.223848}
\showDOI{\tempurl}


\end{thebibliography}

\end{document}